\newcommand{\cmark}{\ding{52}}
\newcommand{\xmark}{\ding{56}}
\newcommand{\be}{\begin{equation}}
\newcommand{\ee}{\end{equation}}
\newcommand{\beq}{\begin{eqnarray}}
\newcommand{\eeq}{\end{eqnarray}}
\newcommand{\Tr}{\mathrm{Tr}\,}
\newcommand{\norm}[2]{\mbox{$|\!| #1 |\!|_{#2}$}}
\newcommand{\bignorm}[2]{\mbox{$\left|\! \left| #1 \right|\!\right|_{#2}$}}
\newcommand{\tD}{\widetilde{D}}
\newcommand{\tfR}{\widetilde{\mathfrak{R}}}
\newcommand{\fR}{\mathfrak{R}}
\newcommand{\fI}{\mathfrak{I}}
\newcommand{\up}{\uparrow}
\newcommand{\down}{\downarrow}
\newcommand{\bigdbar}{\,\Big|\Big|\,}
\newcommand{\mc}{\mathcal}
\newcommand{\mf}{\mathfrak}
\newcommand{\mbb}{\mathbbm}
\newtheorem{theorem}{Theorem}
\newtheorem{lemma}{Lemma}
\newtheorem{axiom}{Axiom}
\theoremstyle{definition}
\newtheorem{definition}{Definition}
\newtheorem{remark}{Remark}
\theoremstyle{remark}
\newtheorem{example}{Example}
\begin{document}

\title{Quantum realism: axiomatization and quantification}

\author{Alexandre C. Orthey Jr.}
\author{R. M. Angelo}
\affiliation{Department of Physics, Federal University of Paran\'a, P.O. Box 19044, 81531-980 Curitiba, Paran\'a, Brazil}

\begin{abstract}
The emergence of an objective reality in line with the laws of the microscopic world has been the focus of longstanding debates. Recent approaches seem to have reached a consensus at least with respect to one aspect, namely, that the encoding of information about a given observable in a physical degree freedom is a necessary condition for such an observable to become an element of the physical reality. Taking this as a fundamental premise and inspired by quantum information theory, here we build an axiomatization for  quantum realism---a notion of realism compatible with quantum theory. Our strategy consists of listing some physically motivated principles able to characterize quantum realism in a ``metric'' independent manner. We introduce some criteria defining monotones and measures of realism and then search for potential candidates within some celebrated information theories---those induced by the von Neumann, R\'enyi, and Tsallis entropies. We explicitly construct some classes of entropic quantifiers, among which some are shown to satisfy all the proposed axioms and hence can be taken as faithful estimates for the degree of reality (or definiteness) of a given physical observable. Hopefully, our framework may offer a formal ground for further discussions on  foundational aspects of quantum mechanics.
\end{abstract}

\maketitle

%=====================
\section{Introduction}

The explanation of quantum phenomena in terms of an objective pre-existent reality is arguably problematic. Unless one is willing to accept nonlocal elements of reality---such as the Bohmian trajectories---it seems better to abandon the idea that an electron always chooses to traverse only one of two slits or to travel a well defined orbit around a proton. Challenging our everyday intuition, quantum mechanics allows a ``quantum coin'' to be prepared in a state like $\ket{\psi}=\sqrt{1-p}\ket{\mathrm{H}}+e^{i\phi}\sqrt{p}\ket{\mathrm{T}}$, with $\{\mathrm{H},\mathrm{T}\}\equiv\{\mathrm{Heads},\mathrm{Tails}\}$ and $p\in(0,1)$. This preparation can by no means be described by an ensemble $(1-p)\ket{\mathrm{H}}\bra{\mathrm{H}}+p\ket{\mathrm{T}}\bra{\mathrm{T}}$ of ``well behaved'' coins, with a fraction $p$ of copies with tails facing upward. In particular, this classical mixture is incapable of encapsulating the fundamental phase $\phi$. Therefore, the idea that a two-valued physical quantity, like an electron's spin or a photon's polarization, is an element of reality---thus being well defined regardless of any observation \textcolor{blue}{(realism)}---does not peacefully coexist with the preparation $\ket{\psi}$.

The discussion about elements of the physical reality in the context of quantum mechanics takes us back to the seminal work of Einstein, Podolsky, and Rosen (EPR) \cite{EPR1935}, where the authors call into question the completeness of quantum theory. Envisaging a scenario in which measurements of incompatible observables are conducted in two spatially separated parts of an entangled system and dismissing any sort of action at distance, EPR concluded that there exist elements of physical reality that are not predicted by  quantum theory, which is then alleged to be an incomplete model of nature. This idea was immediately confronted by Bohr \cite{bohr1935}, who argued that complementary physical quantities associated with incompatible observables cannot be elements of reality in the same experimental arrangement. Decades later, Bell (first in Ref. \cite{bell1964} and then in Ref. \cite{bell1976}) proved that any model based on local hidden variables cannot be consistent with the predictions of quantum mechanics. Given the undeniable success of quantum mechanics in fitting experimental data, violations of Bell inequalities suggest that nature itself is incompatible with the local causality hypothesis \cite{bell1976}. This phenomenon, conventionally referred to as Bell nonlocality \cite{brunner2014}, has been verified through several loophole-free tests \cite{hensen2015,giustina2015,shalm2015,hensen2016,rauch2018,li2018}. Interestingly, local causality has been acknowledged as a compound assumption \cite{zukowski2014}, stronger than locality (in the sense of\ ``parameter independence'' \cite{shimony1984}) but weaker than no-signaling (the impossibility of faster-than-light communication, which has never been seen violated), so that no tension whatsoever exists with relativity principles. The debate still remains concerning an alternative decomposition of the local causality hypothesis into other assumptions, such as some form of realism (see the work of Wiseman and Cavalcanti \cite{wiseman2017,cavalcanti2021} for a detailed discussion on the assumptions underlying Bell's theorem).

Recently, the emergence of physical reality has been discussed in scenarios involving more than one observer. Considering extended Wigner's friend scenarios, Brukner derived a no-go theorem for observer-independent facts \cite{brukner2018}. In another no-go theorem, Frauchiger and Renner showed that quantum mechanics cannot consistently describe the use of itself \cite{frauchiger2018}. Inspired by that, Bong {\it et al.} \cite{bong2020} have proved and experimentally verified that if quantum evolution is controllable on the scale of an observer, then no physical theory can simultaneously satisfy the hypotheses of no-superdeterminism, locality, and absoluteness of observed events (also called macroreality \cite{wiseman2017}). Roughly speaking, by examining instances where two observers confront their experiences about the physical reality, these results reinforce the subtleties underlying the measurement problem. In a different vein, the authors of Ref. \cite{dieguez2018} suggest that the elements of reality associated with the system under scrutiny are established when correlations are developed in an early stage of the dynamics, before any observer comes into play.  

Whatever perspective one may adopt in assessing the quantum phenomena, the task of combining the algebraic structure of the theory with the experienced physical reality is always an issue. In effect, it has been suggested that many of the interpretations of quantum mechanics known to date can be divided in two groups, depending on their attitude toward (the emergence of) realism\footnote{In most cases the term realism is taken as a synonym for ``classical reality'', which may be identified with the dogma according to which all the systems exist and have well defined physical properties at every instant of time regardless the presence or action of observers (brain-endowed systems).} \cite{cabello2017,fuchs2017}. Amongst the frameworks accounting for the emergence of an objective reality from the quantum substratum, quantum Darwinism \cite{zurek2009} is a prominent one. Corroborated by recent experiments \cite{ciampini2018,chen2019,unden2019}, this model claims that reality emerges when information about a quantum system gets prolifically copied into the environment.

Once we accept a noninstantaneous transition to classical reality, it makes sense thinking of an intermediate state of affairs, the one prior to the definitive achievement of realism. Presumably, any gradation of ``nonrealism'' would be possible {\it a priori}. This was the intuition leading Bilobran and Angelo (BA) to introduce the so-called {\it irreality} (the complement of reality)---an operational quantifier intended to diagnose how far a given physical quantity is from full definiteness \cite{bilobran2015}. The criterion of realism envisaged in this approach, henceforth referred to as BA's realism, does not imply full classical reality, since situations are shown to exist where the $z$-component of spin is an element of reality whereas the $x$-component is not. The basic premise employed by BA is that a measurement establishes an element of reality for the measured observable, even when the measurement outcome is not revealed. If a given state is not altered by an unrevealed measurement, then this means that this state already implied an element of reality before the measurement. Hence, the uncertainties associated with the measured observable are of subjective essence and the state is epistemic. Many developments followed from this framework, from a novel notion of nonlocality \cite{gomes2018,orthey2019,fucci2019} to foundational aspects of quantum theory \cite{dieguez2018,engelbert2020} and their proofs of principle \cite{mancino2018,dieguez2021}, to the realization that irreality is a quantum resource \cite{costa2020}.

Given the above, it seems very difficult to figure out what quantum mechanics is all about without a proper framing of the notion of realism. This work is devoted to this task. Inspired by the formal structure of the resource theories of entanglement \cite{plenio2014} and coherence \cite{streltsov2017} and resorting to a fundamental premise shared by quantum Darwinism and BA's realism, we suggest an axiomatization of the concept of {\it quantum realism} (in opposition to classical reality). Our axioms are physically motivated and connected to an informational description of the measurement dynamics. We propose two categories of reality quantifiers: reality monotones and reality measures, the former requiring a smaller set of axioms to be satisfied. Our search for quantifiers takes place within the quantum information theories of R\'enyi~\cite{petz1986,renner2005,datta2009,tomamichel2009,mosonyi2011,frank2013,beigi2013,lennert2013,wilde2014,datta2014,tomamichel2014,gupta2015,mosonyi2015,tomamichel2015,mckinlay2020} and Tsallis~\cite{tsallis1988,raggio1995,abe2003,hiai2011,rajagopal2014,rastegin2016}, whose scopes extend the one induced by the von Neumann entropy~\cite{nielsen2000}. This article is organized as follows. In Sec.~\ref{secAxioms}, we present our list of axioms for quantum realism. In Sec.~\ref{secElements}, we briefly review some elements of the aforementioned quantum information theories. In Sec.~\ref{secMonotones}, we explicitly build reality monotones and measures in consonance with the proposed axioms. Our concluding remarks are left to Sec.~\ref{secDiscussion}.

Before starting, it is useful to spell out the meaning that shall be presumed from the term ``quantum realism'' throughout this article. It is not connected to the {\it existence} of a system, which is taken for granted from the outset, but rather to the {\it definiteness of a physical quantity} prior to any observer's intervention. When this scenario is realized, the corresponding quantity is said to be an {\it element of reality}. Unlike classical reality, quantum realism does not presume all physical quantities to be elements of reality simultaneously. ``Definiteness'', by its turn, does not mean ``total absence of uncertainties'', which would be equivalent to the condition of full predictability appearing in EPR's approach. It actually refers to the absence of quantum uncertainties for a particular physical quantity. Adhering to BA's criterion, the quantum state $\rho=(1-p)\ket{a_1}\bra{a_1}+p\ket{a_2}\bra{a_2}$, with $A\ket{a_{1,2}}=a_{1,2}\ket{a_{1,2}}$, is then taken as an example of scenario in which quantum realism is established for the observable $A$, even though mere subjective uncertainties are present when $p\in(0,1)$. To see this, we check what happens with the state when it is submitted to a nonselective measurement of $A=\sum_i a_i A_i$ (where $A_i=\ket{a_i}\bra{a_i}$). Since $\sum_iA_i\rho A_i=\rho$, then BA's criterion of realism is satisfied, meaning that the element of reality that would presumably be installed by the measurement is already there before the measurement. In other words, in this case, reality is not dictated by the measurement and the present uncertainties reflect only subjective ignorance ($\rho$ is an epistemic state). This is consistent with the fact that no ``interference pattern'' would be observed with respect to the observable $A$. Thus, while in BA's approach $A$ is an element of reality for all $p$, in EPR's this is so only when $p=0$ or $p=1$ (full predictability regimes). The differences between these approaches can be further emphasized in multipartite settings. As thoroughly discussed in Ref.~\cite{fucci2019}, while EPR would claim that the spin observables $S_{x,z}$ are simultaneous elements of reality for the singlet state, BA's criterion implies that these observables actually are maximally unreal. As another example, consider the bipartite separable state $\rho_\text{sep}=\sum_{\lambda}p_\lambda\rho^\mc{A}_\lambda\otimes\rho^\mc{B}_\lambda$. It immediately satisfies Bell's local causality hypothesis---sometimes referred to as local realism---but does not imply BA realism for a vast set of observables, since $\sum_i(A_i\otimes\mbb{1}_\mc{B})\rho_\text{sep}(A_i\otimes\mbb{1}_\mc{B})\neq \rho_\text{sep}$. Throughout this article, we stick to the notion of BA realism.

%%%%%%%%%%%%%%%%%%%%%%%%%%%%%%%%%%%%%%%%%%%%%%%%%%%%%
\section{Axioms for quantum realism}\label{secAxioms}

Quantum resource theories have shown to be a powerful framework to characterize a given quantum effect \cite{chitambar2019}. Incidentally, quantum realism cannot be thought of as a quantum resource because reality abounds for free in the classical regime. On the other hand, quantum realism is complementary to quantum irrealism (as quantified by irreality, which is believed to be a quantum resource \cite{costa2020}). With this inversion in mind, we seek inspiration in the formal structure of quantum resource theories to guide our axiomatization of quantum realism.

We start by grounding our intuition on some empirical facts. After passing through a wall with two slits, an electron has its paths described as a quantum superposition and an interference pattern is observed in the detection system. During the flight, quantum mechanics does not ascribe a well defined position for the electron, so that its position is not an element of reality and the electron is said to behave like a wave. On the other hand, when the two slits are preceded with a very lightweight floating slit, the interference pattern disappears \cite{bohr1935,liu2015} (see the double-slit quantum eraser \cite{walborn2002} for a similar phenomenology). In this case, the entanglement created between the electron and the floating slit allows for the former to be described by a statistical mixture. It then follows that trajectory-based models are admissible so that the electron position can be claimed to be an element of reality. In other words, particle-like elements of reality emerge in this experiment because a given degree of freedom---the momentum of the lightweight slit---encodes which-way information about the electron~\cite{angelo2015}. Notice that these are expected to be the results of the experiment even in the absence of a huge environment, like a thermal bath.

Now, even though the supporters of quantum Darwinism would eventually claim that the conditions for the emergence of an objective reality are not met during the electron flight---for the information about the electron path does not have an environment to be recorded in---we believe they would agree that the motional degree of freedom of the first slit is able to acquire information about the electron path, thus suppressing its wave-like properties. This is exactly the same viewpoint adopted by supporters of BA's realism \cite{dieguez2018,mancino2018}. We then take this common  perspective as our fundamental premise regarding the dynamical emergence of quantum realism: \textit{the reality status of a physical observable can increase only when information about it is stored in another physical degree of freedom}. Moreover, we adhere to BA's conception that such reality degree can be quantified at every instant of time by use of the quantum state.

To formalize these ideas, we consider the functional $\rho\mapsto\fR_A(\rho)$, hereafter named {\it the reality of the observable $A\in\mathfrak{B}(\mathcal{H_A})$ given the state} $\rho\in\mathfrak{B}(\mathcal{H_S=H_A\otimes H_B})$, where $\mathfrak{B}(\mathcal{H})$ is the set of positive semidefinite Hermitian operators acting on the Hilbert space $\mathcal{H}$. Let us now consider some generic dynamics involving the interest system $\mc{S}$ and an ancillary system $\mc{E}$ generically referred to as environment. Assume that the state of the composite system at an arbitrary instant of time $t$ is given by $\upsilon_t\in\mf{B}(\mc{H_S\otimes H_E})$, so that $\rho_t=\Tr_\mc{E}(\upsilon_t)$ denotes the reduced state of $\mc{S}$ with initial condition $\rho_{t=0}=\rho$. An alteration in the reality degree of the observable $A$ in the time interval $[t_1,t_2]$ is here denoted as
\be\label{delta_R}
\Delta\fR_A(t_2,t_1)\coloneqq \fR_A(\rho_{t_2})-\fR_A(\rho_{t_1}).
\ee
Let us also introduce another functional, $\upsilon_t\mapsto I_\mc{E|S}(\upsilon_t)$, aiming at denoting how certain informational content associated with the environment is conditioned to some configuration of the system. Variations of this information with time are then described as
\be\label{delta_I}
\Delta I_{\mc{E|S}}(t_2,t_1)\coloneqq I_{\mc{E|S}}(\upsilon_{t_2})-I_{\mc{E|S}}(\upsilon_{t_1}).
\ee
We are now ready to state our main postulate.

%%%%% AXIOM 1 %%%%%%%%%%%%%%%%%%%%
\begin{axiom}[Reality and information flow]\label{A_reality_inf}
The degree of reality of an observable $A$ is altered in the time interval $[t_1,t_2]$ only when an amount $\Delta I_\mc{E|S}(t_2,t_1)$ of information about this observable is shared with the environment, that is,
\be\label{DR=DI}
\Delta\fR_A(t_2,t_1)\equiv \Delta I_\mc{E|S}(t_2,t_1).
\ee
\end{axiom}
\noindent The specific mathematical structures of $\fR_A$ and $I_\mc{E|S}$ and the sense in which information ``flows'' to the environment will be opportunely specified for each information theory we consider. By now, the crux is realizing that this axiom implements the fundamental premise of quantum Darwinism and BA's realism, namely, that reality varies with time only through a physical process involving interactions, the establishment of correlations, and some form of information exchange. Also, the relation \eqref{DR=DI} attaches an informational profile to the quantifier $\fR_A$. Although this choice is somewhat {\it ad hoc} (after all, one could use, for instance, norm-based ``metrics'') it is very convenient for the establishment of conceptual bridges with well-known information theoretic quantities.
%%%%% AXIOM 1 %%%%%%%%%%%%%%%%%%%%

%%%%% AXIOM 2 %%%%%%%%%%%%%%%%%%%%
Our second axiom aims at making explicit reference to measurements, another fundamental process through which an element of reality emerges. In a sense, this axiom is related to the first one in that a measurement can be viewed as a process whereby information about an observable is shared with an apparatus. On the other hand, a measurement is a very special instance involving, at the last stage, updating of information in the observer's mind, a physical system whose informational dynamics is often excluded from the theoretical description. For this reason, the quantum state collapse is generally used as an effective description for the measurement process. We can also envisage scenarios where measurements are  performed on an ensemble but the results are neither registered nor revealed to the observer. Let us consider such a nonselective measurement for a nondegenerate discrete-spectrum observable $A=\sum_i a_i A_i$, where $A_i=\ket{a_i}\bra{a_i}$ are projectors such that $A_iA_j=\delta_{ij}A_i$ and $\sum_i A_i=\mbb{1}_\mc{A}$. Since no  outcome is ever revealed, the post-measurement state is given by
\be\label{map_phi}
\Phi_A(\rho)\coloneqq\sum_i (A_i\otimes \mbb{1}_\mathcal{B})\,\rho\,(A_i\otimes \mbb{1}_\mc{B})=\sum_i p_i A_i\otimes\rho_{\mc{B}|i},
\ee
where $\rho_{\mc{B}|i}=\Tr_\mc{A}[(A_i\otimes \mbb{1}_\mc{B})\rho]/p_i$ and $p_i=\Tr[(A_i\otimes\mbb{1}_\mc{B})\rho]$. The nonselective-measurement map $\Phi_A$ has shown to be very useful in BA's approach. By considering a protocol involving the action of a secret agent, who intercepts the state, measures $A$ (thus creating an element of reality), and then submits the system to state tomography, BA concluded that the relation $\Phi_A(\rho)=\rho$ can be taken as an operational criterion of realism. The rationale behind this criterion is as follows. If the post-measurement state $\Phi_A(\rho)$ equals the pre-measurement state $\rho$, then the secret measurement has played no role whatsoever for the establishment of realism, meaning that $\rho$ already was a state for which $A$ is an element of reality. In this circumstance, the assessed state $\rho$ is termed an $A$-reality state\footnote{To further appreciate the reason why BA's criterion relies on nonselective instead of selective measurements, consider the $A$-reality state $\rho=(1-p)\ket{a_1}\bra{a_1}+p\ket{a_2}\bra{a_2}$. A selective measurement of $A$ would result in $\ket{a_i}\bra{a_i}$ ($i\in\{1,2\}$), implying a clear alteration in the description of the system and eventually inducing one to believe that the measurement has somehow changed the state of affairs. Conversely, using the nonselective-measurement map one obtains $\Phi_A(\rho)=\rho$, which shows that $\rho$ actually is immune to $A$ measurements and, therefore, can be viewed as a state in which $A$-reality is already installed.}. The idea of taking $\Phi_A(\rho)$ as an $A$-reality state can be further supported by the realization that it also describes a scenario wherein, after collapsing the description to $A_i\otimes\rho_{\mc{B}|i}$, the observer suddenly forgets the measurement outcome $a_i$. This misfortune, though, should not change the fact that an element of reality has just been established by the projective measurement. In addition, because the $A$ reality was already installed, repeating the procedure alters nothing, that is, $\Phi_A\left(\Phi_A(\rho)\right)=\Phi_A(\rho)$. We can also consider a {\it monitoring} of $A$ \cite{dieguez2018}, a generalized version of the unrevealed projective measurement \eqref{map_phi} that is able to interpolate weak and strong measurements through the strength parameter $\epsilon\in[0,1]$. Formally, the monitoring of $A$ is written as
\be\label{monitoring}
\mc{M}_A^\epsilon(\rho)\coloneqq (1-\epsilon)\,\rho+\epsilon\,\Phi_A(\rho).
\ee
Implementing a positive operator-valued measure (POVM) with effects $\{\sqrt{1-\epsilon}\mathbbm{1},\sqrt{\epsilon}A_i\}$, this map is expected to increase the reality of $A$ whenever $\epsilon > 0$. The second axiom then follows.

\begin{axiom}[Reality and measurements]\label{A_reality_meas}
The reality $\fR_A(\rho)$ is a nonnegative real number bounded from above by $\fR_\mc{A}^{\max}$. It is maximum \textit{iff} $\rho$ is an $A$-reality state and never decreases upon generalized measurements of $A$, that is, 
\be
0\leqslant\fR_A(\rho)\leqslant\fR_A\left(\mc{M}_A^\epsilon(\rho)\right)\leqslant \fR_A\left(\Phi_A(\rho)\right)\coloneqq\fR_\mc{A}^{\max},
\ee
where the second and third equalities hold iff $\Phi_A(\rho)=\rho$.
\end{axiom}
\noindent Given the informational nature of the reality quantifier $\fR_A$ and because the maximum amount of information a given Hilbert space can codify is bounded by its dimension, the upper bound $\fR_\mathcal{A}^{\max}$ is expected to depend on $d_\mc{A}=\dim(\mc{H_A})$. Note that while Axiom \ref{A_reality_inf} specifies the measure unity by which reality is quantified, Axiom \ref{A_reality_meas} establishes a numerical scale. The intrinsic relation between these axioms can be appreciated in terms of the dynamics imposed on the initial state $\rho\otimes\ket{e_0}\bra{e_0}$ by a given unitary operator $U_t$ acting on $\mc{H_S\otimes H_E}$. By use of the Stinespring dilation theorem (see also Ref.~\cite{dieguez2018}), we have
\be\label{stinespring}
\rho_t=\Tr_\mathcal{E}\left[U_t\left(\rho\otimes\ket{e_0}\bra{e_0} \right)U_t^\dagger \right]=\mc{M}_A^\epsilon(\rho),
\ee
where $\rho\in\mf{B}(\mc{H_S})$, $U_{t=0}=\mbb{1}_\mc{SE}$, and $\epsilon$ is a parameter that depends on $t$ and other characteristics of $U_t$.
The action of $\mc{M}_A^\epsilon$ generally changes the purity degree of $\rho$, so that some correlations with the environment and corresponding alterations in $I_\mc{E|S}$ are expected to occur, in agreement with the prescription~\eqref{DR=DI}.
%%%%% AXIOM 2 %%%%%%%%%%%%%%%%%%%%

%%%%% AXIOM 3 %%%%%%%%%%%%%%%%%%%%
To state our third axiom, we appeal to the intuition that the reality status of a physical quantity should not decrease upon discard or addition of uncorrelated degrees of freedom. On the other hand, we cannot exclude the possibility of increasing the realism of a quantity when the discarded system is correlated because in this case the system state undergoes an effective decoherence process, and hence, is shifted toward classical reality.
\begin{axiom}[Role of other parts]\label{A_role_parts}
\textrm{\bf (a)} Discarding a part of the system does not diminishes reality, that is,
\begin{subequations}
\be
\fR_A\left(\Tr_\mc{X}(\rho)\right)\geqslant \fR_A(\rho),
\ee
for $\mc{H_X}\subseteq\mathcal{H_B}$, where the equality applies when the discarded part is uncorrelated. Also, \textrm{\bf (b)} adding a fully uncorrelated system $\mc{Z}$ can by no means change the elements of reality of the system $\mc{S}$, that is
\be
\fR_A(\rho\otimes\Omega)=\fR_A(\rho),
\ee
\end{subequations}
where $\Omega\in\mf{B}(\mc{H}_\mc{Z})$.
\end{axiom}
\noindent From a mathematical viewpoint, we can recognize by Axioms \ref{A_reality_meas} and \ref{A_role_parts} a set of maps, henceforth called {\it realistic operations}, that do not diminish the reality of an observable. Formally, a realistic operation is a map $\rho \mapsto \Gamma(\rho)$ such that $\fR_A\left(\Gamma(\rho)\right)\geqslant \fR_A(\rho)$. For the above axioms, we have identified a particular set of realistic operations, that is, $\Gamma\in\{\mc{M}_A^\epsilon,\Tr_\mc{X},\otimes\,\Omega\}$.
%%%%% AXIOM 3 %%%%%%%%%%%%%%%%%%%%

%%%%% AXIOM 4 %%%%%%%%%%%%%%%%%%%%
With our fourth axiom we make a clear departure from classical reality. The point consists of implementing the intuition according to which, for a generic preparation $\rho$, noncommuting observables, such as three orthogonal spin components, cannot be simultaneous elements of reality. In other words, quantum realism is expected to be upper bounded.
\begin{axiom}[Uncertainty relation]\label{A_uncertainty}
Two observables $X$ and $Y$ acting on $\mc{H_A}$ cannot be simultaneous elements of reality in general, that is,
\be
\fR_X(\rho)+\fR_{Y}(\rho)\leqslant 2\fR^{\max}_\mc{A}.
\label{uncertainty}
\ee

\end{axiom}
The equality is expected to hold only in ``classical-like'' circumstances, such as  $\rho=(\mbb{1}/d_\mc{A})\otimes\rho_\mc{B}$ or $[X,Y]=0$. The  statement \eqref{uncertainty} links quantum realism to Bohr's complementarity principle. Interestingly, a recent experiment conducted in a nuclear magnetic resonance platform \cite{dieguez2021} has been reported corroborating the validity of the uncertainty relation \eqref{uncertainty} within the information theory induced by the von Neumann entropy.
%%%%% AXIOM 4 %%%%%%%%%%%%%%%%%%%%

%%%%% AXIOM 5 %%%%%%%%%%%%%%%%%%%%
Let us consider now a collection of quantum states $\rho_i\in\mc{H_S}$ with associated probabilities $p_i$ and realities $\fR_A(\rho_i)$. We do not expect the simple combination of these individual members to generate an ensemble with a lower reality status. In fact, mixing typically is an action toward classicality, so that reality is expected to be a concave functional. The fifth axiom is then stated as follows.
\begin{axiom}[Mixing]\label{A_mix}
The reality of a mixture $\{p_i,\rho_i\}$ of density operators $\rho_i$ with respective weights $p_i$ can never decrease the installed mean reality, that is,
\be
\fR_A\left(\sum_i p_i\rho_i \right)\geqslant \sum_i p_i\fR_A(\rho_i).
\ee
\end{axiom}
%
%%%%% AXIOM 5 %%%%%%%%%%%%%%%%%%%%

So far, we have presented the properties that we consider sufficient to define a meaningful {\it reality monotone}, in the sense that, upon the processes described above, reality never decreases\footnote{With respect to Axiom \ref{A_reality_inf}, we are of course envisaging dynamics whereby correlations typically build up so that $\Delta I_\mc{E|S}(t_2,t_1)\geqslant 0$. This is particularly true when the environment $\mc{E}$ is a genuine reservoir, like a thermal bath.}. That is, the typical move is toward classical reality, not the opposite. Although this set of axioms is rather constraining, we shall see in Sec. \ref{secMonotones} that it can be satisfied by a number of quantifiers supported not only by the standard von Neumann information theory but also by the R\'enyi and the Tsallis ones. This justifies the following definition.

\begin{definition}
A functional $\rho\mapsto\fR_A(\rho)$ satisfying Axioms \ref{A_reality_inf}-\ref{A_mix} is called a \textit{reality monotone}.
\end{definition}
\noindent 

In what follows we introduce two supplementary properties that can arguably be viewed as natural requirements for a reality measure.

%%%%% AXIOM 6 %%%%%%%%%%%%%%%%%%%%
\begin{axiom}[Additivity]\label{A_add}
The reality is an additive quantity over $n$ independent systems each one prepared in a state $\rho_i$, that is,
\be\label{eq_A_add}
\fR_A\left(\bigotimes_{i=1}^n\rho_i\right)=\sum_{i=1}^n\fR_A(\rho_i),
\ee
where $A$, on the left-hand side, acts on each one of the $n$ systems.
\end{axiom}
\noindent In particular, this means that given $n$ independent (eventually far apart) systems prepared in the same state $\rho$, the total amount of reality of an observable $A$ that acts on each $\rho$ is nothing but the direct sum $n\fR_A(\rho)$.
%%%%% AXIOM 6 %%%%%%%%%%%%%%%%%%%%

%%%%% AXIOM 7 %%%%%%%%%%%%%%%%%%%%
\begin{axiom}[Flagging]\label{A_flag}
The mean reality of an ensemble $\{p_i,\rho_i\}$ does not change under flagging, that is, 
\be\label{flagging}
\fR_A\left(\sum_i p_i\rho_i\otimes \ket{x_i}\bra{x_i} \right)= \sum_i p_i\fR_A(\rho_i).
\ee
\end{axiom}
\noindent The flagging property \cite{liu2019} has recently been discussed within the context of quantum resource theories. Suppose one identifies with a flag $\ket{x_i}\in\mc{H_X}$ each one of the states $\rho_i\in\mc{H_S}$ of our collection. The above axiom reflects the fact that merely labeling each element of the ensemble with a flag  basis $\{\ket{x_i}\}$ should not increase the mean reality. In other words, the insertion of classical correlations with respect to the flag is innocuous on average. 
%%%%% AXIOM 7 %%%%%%%%%%%%%%%%%%%%

With the above axioms we have set the grounds to define what we propose to be a significant reality quantifier.
\begin{definition}\label{def_Rmeasure}
A functional $\rho\mapsto\fR_A(\rho)$ satisfying Axioms \ref{A_reality_inf}-\ref{A_flag} is called a \textit{reality measure}.
\end{definition}

The next section is reserved for a brief review of information theoretic quantities that will be the basis for the construction of faithful reality quantifiers.

%%%%%%%%%%%%%%%%%%%%%%%%%%%%%%%%%%%%%%%%%%%%%%%%
\section{Elements of quantum information theory}
\label{secElements}

Quantum divergences (or relative entropies) are measures of the distinctiveness of positive operators. These measures are known for their usefulness and versatility in defining several quantum information concepts, in particular, the one that will be shown to be of key relevance in this work, namely, the quantum conditional information. We now review three divergence measures, namely, the von Neumann relative entropy, R\'enyi divergences, and Tsallis relative entropies.

%%%%%%%%%%%%%%%%%%%%%%%%%%%%%%%%%%%%%%%%%
\subsection{von Neumann relative entropy}

The von Neumann relative entropy, also known as the Umegaki relative entropy \cite{umegaki1962}, is one of the most used divergences in quantum information theory. It is defined as
\be\label{vonNeuman}
D(\rho||\sigma)\coloneqq\frac{\Tr\left[\rho(\ln\rho-\ln\sigma)\right]}{\Tr \rho},
\ee
where $\rho>0$, $\sigma\geqslant 0$ and $\ker\sigma\subseteq\ker\rho$. The factor $\Tr\rho$ ensures that $D(\lambda\rho||\lambda\sigma)=D(\rho||\sigma)$ for all $\lambda>0\in\mathbb{R}$. $D(\rho||\sigma)$ is a continuous functional satisfying (whenever $\rho\geqslant\sigma $) the positive definiteness property:
\be\label{D>0}
D(\rho||\sigma)\geqslant 0,\ \text{with equality holding iff}\ \rho=\sigma.
\ee
The von Neumann relative entropy also satisfies the following properties: (i) unitary invariance,
\be\label{uni_invariance}
D\left(U\rho U^\dagger||U\sigma U^\dagger\right)=D(\rho||\sigma),
\ee
for any unitary $U$; (ii) additivity,
\be\label{additivity}
D\left(\bigotimes_i\rho_i\bigdbar\bigotimes_i\sigma_i\right)=\sum_i D(\rho_i||\sigma_i);
\ee
(iii) joint convexity,
\be\label{convexity}
D\left(\sum_i p_i\rho_i\bigdbar\sum_i p_i\sigma_i\right)\leqslant \sum_i p_i D(\rho_i||\sigma_i);
\ee
and (iv) data processing inequality (DPI),
\be\label{DPI}
D\left(\Lambda(\rho)||\Lambda(\sigma)\right)\leqslant D(\rho||\sigma),
\ee
also known as contractivity or monotonicity under quantum channels $\Lambda$ (completely positive trace-preserving maps).

The largest divergence implied by Eq. \eqref{vonNeuman} emerges when one considers a generic pure state, $\psi=\ket{\psi}\bra{\psi}$, and the maximally mixed one, $\mbb{1}/d$, with $d=\dim\mathcal{H}$. We have $D\left(\psi||\mbb{1}/d\right)=S(\mbb{1}/d)=\ln{d}$ (sometimes referred to as the normalization condition $D(\mbb{1}||\mbb{1}/d)=S(\mbb{1}/d)$ \cite{lennert2013}), where 
\be
S(\rho)\coloneqq -\frac{\Tr(\rho\ln\rho)}{\Tr \rho}
\ee 
is the von Neumann entropy of $\rho$. The quantum informational content $I(\rho)$ of a quantum state $\rho$ is a concept complementary to ignorance, that is, $I(\rho)+S(\rho)=I^{\max}=S^{\max}$ with $S^{\max}=S(\mbb{1}/d)=\ln{d}=I(\psi)=I^{\max}$ (meaning that the entropy of a maximally mixed state $\mbb{1}/d$ equals the informational content of a pure state $\psi$). In terms of the relative entropy, information can be defined as
\be\label{inf_content}
I(\rho)\coloneqq D(\rho||\mathbbm{1}/d)=\ln{d}-S(\rho),
\ee
Since pure states (resp. maximally mixed states) have maximum (resp. minimum) informational content, $I(\rho)$ is itself a direct measure of purity. One can make a further interpretation of $I(\rho)$ referring back to the map \eqref{map_phi}. Consider the pairs  $\{A,A'\}$ and $\{B,B'\}$ of noncommuting operators acting on $\mc{H_A}$ and $\mc{H_B}$, respectively, and forming maximally unbiased bases (MUB). One has $\Phi_{AA'}(\rho)\equiv\Phi_A\Phi_{A'}(\rho)=\Phi_{A'}\Phi_A(\rho)=\frac{\mbb{1}_\mc{A}}{d_\mc{A}}\otimes \rho_\mc{B}$, where $\rho_\mc{B}=\Tr_\mc{A}(\rho)$, and similarly for $\{B,B'\}$. For the whole context $\mbb{C}=\{A,A',B,B'\}$, we can write  $\Phi_\mbb{C}(\rho)=\frac{\mbb{1}_\mc{A}}{d_\mc{A}}\otimes\frac{\mbb{1}_\mc{B}}{d_\mc{B}}=\mbb{1}/d$, with $d=d_\mc{A}d_\mc{B}$. This is a state of full reality (or classical reality), since for any observable $X$ one has $\Phi_X(\mbb{1}/d)=\mbb{1}/d$, that is, a nonselective measurement of $X$ cannot change the established state of affairs. Therefore, we can rewrite Eq.~\eqref{inf_content} in the form $I(\rho)=D(\rho||\Phi_\mbb{C}(\rho))$, which allows us to interpret the informational content as the divergence of $\rho$ with respect to its classical counterpart $\Phi_\mbb{C}(\rho)$.

Equation \eqref{vonNeuman} can also be used to define the quantum conditional entropy of a quantum state $\rho$,
\be\label{H_vonNeuman}
H_\mc{A|B}(\rho)\coloneqq -D(\rho||\mbb{1}_\mc{A}\otimes\rho_\mc{B}).
\ee
It can be checked that this formula yields the usual relation $H_\mc{A|B}(\rho)=S(\rho)-S(\rho_\mc{B})$. The conditional entropy can alternatively be defined through an optimization process over the subspace $\mc{B}$, since $\inf_{\sigma_\mc{B}}D(\rho||\mbb{1}_\mc{A}\otimes\sigma_\mc{B})=D(\rho||\mbb{1}_\mc{A}\otimes\rho_\mc{B})$. By its turn, the conditional information of $\rho$ can also be defined through the {\it information-ignorance complementarity}, that is, $I_\mc{A|B}(\rho)+H_\mc{A|B}(\rho)=H_\mc{A|B}^{\max}=H_\mc{A|B}\left(\frac{\mbb{1}_\mc{A}}{d_\mc{A}}\otimes\rho_\mc{B}\right)=\ln{d_\mc{A}}$, a relation that will be taken as a fundamental premise in all information theories throughout this article. We then write
\be\label{Icond_vonNeuman}
I_\mc{A|B}(\rho)\coloneqq \ln{d_\mc{A}}-H_{\mc{A|B}}(\rho)=D\left(\rho\bigdbar \tfrac{\mbb{1}_\mc{A}}{d_\mc{A}}\otimes\rho_\mc{B} \right).
\ee
Because both entries in the above divergence are normalized density operators, one has $0\leqslant I_{\mathcal{A}|\mathcal{B}}(\rho)\leqslant \ln d$. Also, the conditional information can be decomposed as $I_\mc{A|B}(\rho)=I(\rho_\mc{A})+I_\mc{A:B}(\rho)$, where $I(\rho_\mc{A})=D\left(\rho_\mc{A}||\mbb{1}_\mc{A}/d_\mc{A}\right)$ is the informational content of part $\mc{A}$ and $I_\mc{A:B}(\rho)=D\left(\rho||\rho_\mc{A}\otimes\rho_\mc{B}\right)$ is the mutual information, a measure of total correlations between the parts. In this sense, $I_\mc{A|B}$ can be said to be composed of ``local'' and ``global'' information. Now, using the state $\upsilon_t=U_t\left(\rho\otimes\ket{e_0}\bra{e_0}\right)U_t^\dag$, we can compute the variation of $I_\mc{E|S}(\upsilon_t)=I\left(\Tr_\mc{S}(\upsilon_t)\right)+I_\mc{E:S}(\upsilon_t)$ in the interval $[0,t]$ and then return to Axiom \ref{A_reality_inf} to better specify the notion of ``information flow''. The condition for the $A$-reality increase, $\Delta I_\mc{E|S}(t,0)>0$, will be satisfied when $I_\mc{E:S}(\upsilon_t)>S\left(\Tr_\mc{S}(\upsilon_t)\right)$, which means that the share of information (correlations) between system and environment has to be sufficiently large for the emergence of reality. An alternative way of appreciating the role of the information flow for the emergence of realism is by writing $I(\upsilon_t)=I\big(\Tr_\mc{E}\upsilon_t\big)+I_\mc{E|S}(\upsilon_t)$ and then noticing that $I(\upsilon_t)$ is conserved in any unitary dynamics. It readily follows that $\Delta I_\mc{E|S}=I\big(\Tr_\mc{E}\upsilon_0\big)-I\big(\Tr_\mc{E}\upsilon_t\big)\equiv-\Delta I_\mc{S}$. By Axiom \ref{A_reality_inf} we then have $\Delta\mf{R}_A=-\Delta I_\mc{S}$, which shows that the $A$-reality increases whenever information ``flows out of the system''.

%%%%%%%%%%%%%%%%%%%%%%%%%%%%%%%%
\subsection{R\'enyi divergences}

Constituting a generalization of the von Neumann relative entropy, the R\'enyi divergences \cite{petz1986} are defined as
\be\label{renyi}
D_\alpha(\rho||\sigma)\coloneqq\frac{1}{\alpha-1}\ln\frac{\Tr\left(\rho^\alpha\sigma^{1-\alpha}\right)}{\Tr\rho},
\ee
for $\alpha\in(0,1)\cup(1,+\infty)$ and the same conditions of quantity \eqref{vonNeuman}. Here, we also have $D_\alpha(\lambda\rho||\lambda \sigma)=D_\alpha(\rho||\sigma)$, for any positive real $\lambda$. Equation \eqref{renyi} is said a generalization of Eq. \eqref{vonNeuman} because $D_{\alpha\to 1}(\rho||\sigma)=D(\rho||\sigma)$. Another relative entropy comprised by the R\'enyi divergences is the min-relative entropy $D_{\min}(\rho||\sigma)\coloneqq \lim_{\alpha\to 0}D_{\alpha}(\rho||\sigma)=-\ln[\Tr(\rho^0\sigma)/\Tr\rho]$ where $\rho^0$ is the projection onto the support of $\rho$, as defined by Datta \cite{datta2009} (see Table \ref{tab:properties} for a summary of properties and Appendix \ref{appendix_cases} for more details about the min-relative entropy). Some properties that are satisfied by the von Neumann relative entropy encounter, however, some restrictions in the R\'enyi generalization: joint convexity is valid only for $\alpha\in(0,1)$ and DPI only for $\alpha\in(0,1)\cup(1,2]$ (see Ref. \cite{mosonyi2011} and references therein). The rest of the properties remain intact. A variant of definition \eqref{renyi} is the so-called sandwiched R\'enyi divergence, which was independently proposed by M\"uller-Lennert {\it et al.} \cite{lennert2013} and Wilde {\it et al.} \cite{wilde2014} as
\be\label{sandwiched}
\tD_\alpha(\rho||\sigma)\coloneqq\frac{1}{\alpha-1}\ln\left\{\frac{1}{\Tr\rho}\Tr\left[\left(\sigma^\frac{1-\alpha}{2\alpha}\rho\sigma^\frac{1-\alpha}{2\alpha} \right)^\alpha \right]\right\},
\ee
with the same conditions of quantity \eqref{renyi}. Besides reducing to the von Neumann relative entropy as $\alpha\to 1$, the divergence \eqref{sandwiched} reproduces other famous relative entropies, such as the collisional relative entropy ($\alpha=2$) \cite{renner2005} and the max-relative entropy $D_{\max}(\rho||\sigma)\coloneqq \lim_{\alpha\to+\infty}\tD_\alpha(\rho||\sigma)$ \cite{datta2009} (see Table \ref{tab:properties} for a summary of properties and Appendix \ref{appendix_cases} for more details about the collisional and the max-relative entropies). The sandwiched R\'enyi divergence  satisfies the same properties as its counterpart \eqref{renyi} but for different ranges of parameters: joint convexity is valid only for $\alpha\in[1/2,1)$ while DPI holds for $\alpha\in[1/2,1)\cup(1,+\infty)$ \cite{frank2013,beigi2013}. It was proved for $\alpha\in(0,1)$ \cite{datta2014} and $\alpha>1$ \cite{wilde2014} that the inequality $\tD_\alpha(\rho,\sigma)\leqslant D_\alpha(\rho,\sigma)$ is always true, where the equality holds iff $[\rho,\sigma]=0$. The necessity of this statement was noted in Ref. \cite{mosonyi2015}.

The issue concerning the commutativity of operators raised the discussion about the use of the divergence \eqref{sandwiched} instead of \eqref{renyi}. However, as pointed out by Gupta and Wilde \cite{gupta2015}, $D_\alpha(\rho||\sigma)$ ``is perfectly well defined'' when $\rho$ and $\sigma$ do not commute and, in fact, this divergence has proven to be useful for discrimination tasks in some contexts when $\alpha\in(0,1)$, including the limiting case $\alpha\to 0$ (see Ref. \cite{gupta2015} and references therein). The problem with definition \eqref{renyi} is that it does not satisfy DPI for $\alpha\in(2,+\infty)$, a large range that is in fact covered by the sandwiched version, including its limiting case ($\alpha\to +\infty$) known as max-relative entropy \cite{renner2005}. Since divergences are fundamental tools for one to distinguish a quantum state from another, it is expected that after the action of a quantum channel the states become less distinguishable and, therefore, DPI is an essential property for quantum information. Nonetheless, to obtain reality quantifiers it will be sufficient for us to focus on the original version of the R\'enyi divergence, since all the results will directly have a counterpart for the sandwiched version.

Now, using Eq. \eqref{renyi} one checks the validity of the normalization condition, $D_\alpha(\psi||\mbb{1}/d)=\ln{d}=S_\alpha(\mbb{1}/d)$, where
\be
S_\alpha(\rho)\coloneqq -\frac{1}{\alpha-1}\ln\frac{\Tr\rho^\alpha}{\Tr\rho},
\ee
is the quantum R\'enyi entropy of $\rho$. The R\'enyi informational content of $\rho$ can be defined as
\be
I_\alpha(\rho)\coloneqq D_\alpha(\rho||\mathbbm{1}/d)=\ln{d}-S_\alpha(\rho),
\ee
which reproduces Eq.~\eqref{inf_content} as $\alpha\to 1$. Since $\rho$ commutes with $\mathbbm{1}/d$, the original and the sandwiched R\'enyi divergences result in the same informational content. Here as well, we can interpret the informational content as the amount by which $\rho$ diverges from a full reality state, that is, $I_\alpha(\rho)=D_\alpha\left(\rho||\Phi_\mbb{C}(\rho)\right)$.

It is usual to define the R\'enyi conditional entropy in at least two different ways:
\begin{subequations}
\begin{align}
H_{\mc{A|B}}^{\alpha\down}(\rho)&\coloneqq -D_\alpha(\rho||\mbb{1}_\mc{A}\otimes\rho_\mc{B}),\label{Ha}\\
H_{\mc{A|B}}^{\alpha\up}(\rho)&\coloneqq -\inf_{\sigma_\mc{B}}D_\alpha(\rho||\mbb{1}_\mc{A}\otimes\sigma_\mc{B}),\label{Hb}
\end{align}
\end{subequations}
with $\sigma_\mc{B}\in\mf{B}(\mc{H_B})$.
The arrows are used to express the relation $H_{\mathcal{A|B}}^{\alpha\up}\geqslant H_{\mathcal{A|B}}^{\alpha\down}$.
It is noteworthy that, unlike its von Neumann counterpart \eqref{H_vonNeuman}, the R\'enyi conditional entropy cannot be expanded as $S_\alpha(\rho)-S_\alpha(\rho_\mc{B})$. Moreover, as emphasized by Tomamichel {\it et al}. \cite{tomamichel2014}, proposals along these lines lead to conceptual problems, such as the invalidation of DPI. From the complementarity relation $I_\mc{A|B}^{\alpha\up,\down}(\rho)+H_\mc{A|B}^{\alpha\up,\down}(\rho)=[H_{A|B}^\alpha]_{\max}=H_\mc{A|B}^{\alpha\up,\down}\left(\frac{\mbb{1}_\mc{A}}{d_\mc{A}}\otimes \rho_\mc{B}\right)=\ln{d_\mc{A}}$, we propose the R\'enyi conditional information measures
\begin{subequations}
\begin{align}
I_\mc{A|B}^{\alpha\up}(\rho)\coloneqq \ln d_\mc{A}- H_\mc{A|B}^{\alpha\down}(\rho),\label{Iup}\\
I_\mc{A|B}^{\alpha\down}(\rho)\coloneqq \ln d_\mc{A}- H_\mc{A|B}^{\alpha\up}(\rho),\label{Idown}
\end{align}
\end{subequations}
with arrows justified by the relation $I_\mc{A|B}^{\alpha\down}(\rho)\leqslant I_\mc{A|B}^{\alpha\up}(\rho)$. For any quantum channel $\Lambda_{\mc{B}\to\mc{B'}}$, both  measures satisfy DPI, that is, $  I_{\mc{A|B}}^{\alpha\up,\down}(\Lambda(\rho))\leqslant I_{\mc{A|B'}}^{\alpha\up,\down}(\rho)$ 
for $\alpha\in(0,1)\cup(1,2]$ (including the limiting case $\alpha\to 0$) and $\alpha\in [1/2,1)\cup(1,+\infty)$, respectively \cite{tomamichel2014}. Both conditional information measures are  convex under mixing for $\alpha\in(0,1)$ and $\alpha\in[1/2,1)$, respectively \cite{tomamichel2015}. Finally, by use of Eq. \eqref{renyi} we have 
\begin{subequations}
\begin{align}
I_\mc{A|B}^{\alpha\up}(\rho)&=D_\alpha\left(\rho\bigdbar\tfrac{\mbb{1}_\mc{A}}{d_\mc{A}}\otimes\rho_\mc{B}\right),\label{Icond_Renyi_up}\\
I_\mc{A|B}^{\alpha\down}(\rho)&=\inf_{\sigma_\mc{B}}D_\alpha\left(\rho\bigdbar\tfrac{\mbb{1}_\mc{A}}{d_\mc{A}}\otimes\sigma_\mc{B}\right).\label{Icond_Renyi_down}
\end{align}\label{Icond_Renyi}
\end{subequations}

%%%%%%%%%%%%%%%%%%%%%%%%%%%%%%%%%%%%%%%%%
\begin{table*}[htb]
\renewcommand{\arraystretch}{1.5}
\centering
\begin{tabular}{p{0.15\textwidth}>{\centering}p{0.12\textwidth}>{\centering}p{0.12\textwidth}>{\centering}p{0.12\textwidth}>{\centering}p{0.15\textwidth}>{\centering}p{0.12\textwidth}cp{0.12\textwidth}}
\hline
    & $D$ & $D_\alpha$ & $D_{\min}$ & $\tD_\alpha$ & $D_{\max}$ & $D_q$ \\
\hline
    Continuity & \cmark & \cmark & \xmark & \cmark & \xmark& \cmark \\
    Positive definiteness & \cmark & \cmark & \xmark& \cmark & \cmark & \cmark \\
    Unitary invariance & \cmark & \cmark & \cmark & \cmark & \cmark & \cmark\\
    Additivity & \cmark & \cmark & \cmark & \cmark & \cmark & \xmark\\
    Joint convexity & \cmark & $\alpha\in(0,1)$ & \cmark & $\alpha\in[1/2,1)$ & \xmark& $q\in(0,1)\cup(1,2]$ \\
    DPI & \cmark & $\alpha\in(0,1)\cup(1,2]$ & \cmark & $\alpha\in[1/2,1)\cup(1,+\infty)$ & \cmark & $q\in(0,1)\cup(1,2]$\\
\hline
\end{tabular}
\caption{Summary of properties satisfied by the von Neumann relative entropy $D$, the R\'enyi divergence $D_\alpha$, the min-relative entropy $D_{\min}\coloneqq D_{\alpha\to 0}$, the sandwiched R\'enyi divergence $\tD_\alpha$, the max-relative entropy $D_{\max}\coloneqq \tD_{\alpha\to+\infty}$, and the Tsallis relative entropy $D_q$, for any pair $\{\rho,\sigma\}$ of density operators.}
\label{tab:properties}
\end{table*}

%%%%%%%%%%%%%%%%%%%%%%%%%%%%%%%%%%%%%%%
\subsection{Tsallis relative entropies}

To close this section, let us revisit the Tsallis relative entropies, originally proposed by Abe \cite{abe2003}. Here we adopt the form
\be\label{tsallis}
D_q(\rho||\sigma)\coloneqq\frac{\Tr\left[\rho^q\left(\ln_q {\rho}-\ln_q{\sigma}\right) \right]}{\Tr\rho}=\frac{\Tr(\rho-\rho^q\sigma^{1-q})}{(1-q)\,\Tr\rho},
\ee
where $q\in(0,1)$ and $\ln_q(x)\coloneqq (x^{1-q}-1)/(1-q)$. As pointed out by Rastegin \cite{rastegin2016}, the definition \eqref{tsallis} can be extended to $q>1$ if $\ker\sigma\subseteq\ker\rho$. The normalization guarantees that $D_q(\lambda\rho||\lambda\sigma)=D_q(\rho||\sigma)$ for any $\lambda>0\in\mbb{R}$. When $q\to 1$, we regain the von Neumann relative entropy. The Tsallis relative entropies and the R\'enyi divergences share several properties. $D_q(\rho||\sigma)$ is a continuous and positive definite functional in $\rho$ and $\sigma$ for $q\in(0,1)\cup(1,+\infty)$. In addition, the Tsallis relative entropies satisfy unitary invariance for $q\in(0,1)\cup(1,+\infty)$, and both joint convexity and DPI for $\alpha\in(0,1)\cup(1,2]$ \cite{hiai2011,rastegin2016}. Most importantly, they are pseudo-additive, that is
\begin{eqnarray}
&D_q(\rho_\mc{A}\otimes\rho_\mc{B}||\sigma_\mc{A}\otimes\sigma_\mc{B})=D_q(\rho_\mc{A}||\sigma_\mc{A})+D_q(\rho_\mc{B}||\sigma_\mc{B})&\nonumber\\
&+(q-1)D_q(\rho_\mc{A}||\sigma_\mc{A})D_q(\rho_\mc{B}||\sigma_\mc{B}).&\label{pseudo_add}
\end{eqnarray}
From Eq.~\eqref{tsallis} we find $D_q(\psi||\mbb{1}/d)=d^{q-1}S_q(\mbb{1}/d)$, where
\be
S_q(\rho)\coloneqq -\Tr\left(\rho^q\ln_q\rho\right)=-\frac{\Tr(\rho-\rho^q)}{(1-q)\,\Tr\rho}
\ee
is the Tsallis entropy of $\rho$ \cite{tsallis1988,raggio1995} and $S_q(\mbb{1}/d)=\ln_q{d}$. Note that, differently from the structure found for the previous information theories, here the normalization relation is such that $D_q(\psi||\mbb{1}/d)\neq S_q(\mbb{1}/d)$. This suggests that it may be convenient to ``correct'' either $D_q$ or $S_q$ by means of a scaling factor like $d^{1-q}$ or $d^{q-1}$. To preserve the fundamental status of the information-ignorance complementarity, we then define the Tsallis informational content as
\be
I_q(\rho)\coloneqq d^{1-q}D_q(\rho||\mbb{1}/d)=\ln_q{d}-S_q(\rho).
\ee
Similarly to what can be found in Ref. \cite{rajagopal2014}, let us define the Tsallis conditional entropy as
\be
H_\mc{A|B}^q(\rho)\coloneqq -D_q(\rho||\mbb{1}_\mc{A}\otimes\rho_\mc{B}).
\ee
One may wonder whether  $-\inf_{\sigma_\mc{B}}D_q(\rho||\mbb{1}_\mc{A}\otimes\sigma_\mc{B})$ would be an admissible formulation as well. Although we believe there is no reason why this proposal should be ruled out {\it a priori}, we are not aware of any study supporting it. Following previous rationales, we now look for a conditional information satisfying the information-ignorance relation $I_\mc{A|B}^q(\rho)+H_\mc{A|B}^q(\rho)=[H_\mc{A|B}^q]_{\max}=H_\mc{A|B}^q\left(\frac{\mbb{1}_\mc{A}}{d_\mc{A}}\otimes\rho_\mc{B}\right)=\ln_q{d_\mc{A}}$. We then find
\be
I_\mc{A|B}^q(\rho)\coloneqq\ln_q{d_\mc{A}}-H_\mc{A|B}^q(\rho),
\ee
Using the above formulas, one shows that
\be\label{Icond_Tsallis}
I_\mc{A|B}^q(\rho)=d_\mc{A}^{1-q}\,D_q\left(\rho\bigdbar\tfrac{\mbb{1}_\mc{A}}{d_\mc{A}}\otimes\rho_\mc{B}\right),
\ee
which correctly reduces to the result \eqref{Icond_vonNeuman} as $q\to 1$. Up to the scaling factor $d_\mc{A}^{1-q}$, proved necessary in the present scenario, we note by Eqs. \eqref{Icond_vonNeuman}, \eqref{Icond_Renyi_up}, and \eqref{Icond_Tsallis} that it is possible to maintain a unified picture for the definition of the conditional informational of $\rho$ in terms of its divergence with respect to its full reality counterpart,  $\Phi_{AA'}(\rho)=\frac{\mbb{1}_\mc{A}}{d_\mc{A}}\otimes\rho_\mc{B}$. We refer the reader to Table \ref{tab:properties} for a summary of properties that are satisfied by each divergence presented in this section.

%%%%%%%%%%%%%%%%%%%%%%%%%%%%%%%%%%%%%%%%
\section{Reality monotones and measures}
\label{secMonotones}

\subsection{von Neumann reality measure}

Through the relation $\fR_A(\rho_t)-\fR_A(\rho)=I_\mc{E|S}(\upsilon_t)-I_\mc{E|S}(\upsilon_0)$, Axiom \ref{A_reality_inf} links the emergence of realism in the system $\mc{S}$ with the acquisition of information by the environment $\mc{E}$. Our strategy here consists of starting with the uncorrelated state $\upsilon_0=\rho\otimes\ket{e_0}\bra{e_0}$ and searching for a dynamics that yields maximum reality for $A$. That is, we want to find a reduced state $\rho_t=\Tr_\mc{E}(\upsilon_t)=\Phi_A(\rho)$ such that $\fR\left(\rho_t\right)=\fR_\mc{A}^{\max}$ as per Axiom \ref{A_reality_meas}, so that we can construct the $A$-reality measure $\fR_A(\rho)=\fR_\mc{A}^{\max}-\Delta I_\mc{E|S}(t,0)$. From the additivity [Eq. \eqref{additivity}] of the von Neumann conditional information [Eq. \eqref{Icond_vonNeuman}] we find
\begin{align}
I_\mc{E|S}(\upsilon_0)&=D\left(\rho\otimes\ket{e_0}\bra{e_0}\bigdbar\rho\otimes\tfrac{\mbb{1}_\mc{E}}{d_\mc{E}}\right)=\ln d_\mc{E}.\label{I0_vonNeumann}
\end{align}
Because there are no correlations in the initial state, the informational content of the environment is not conditioned to the system. Now we consider a dynamics induced by a unitary operator $U_t$ satisfying Eq. \eqref{stinespring} with $\epsilon=1$. Since $\Tr_\mc{E}(\upsilon_t)=\Phi_A(\rho)$, we have $ I_\mc{E|S}(\upsilon_t)=D\left(U_t\upsilon_0 U_t^\dag||\Phi_A(\rho)\otimes\mbb{1}_\mc{E}/d_\mc{E}\right)$. Using the fact that $\Phi_A(\rho)\otimes\mbb{1}_\mc{E}/d_\mc{E}$ does not evolve under the action of $U_t$ [see Theorem \ref{theorem_uni}, Appendix \ref{appendixB}], we can freely apply $U_t$ onto it and then use the unitary invariance of the von Neumann relative entropy to obtain $I_\mc{E|S}(\upsilon_t) =D\left(\upsilon_0 ||\Phi_A(\rho)\otimes\mbb{1}_\mc{E}/d_\mc{E}\right)$. Using additivity again, we get
\be
I_\mc{E|S}(\upsilon_t)
=\ln d_\mc{E}+D\left(\rho||\Phi_A(\rho)\right).
\label{It_vonNeumann}
\ee
From Eqs. \eqref{I0_vonNeumann} and \eqref{It_vonNeumann} we have $\Delta I_\mc{E|S}(t,0)=D\left(\rho||\Phi_A(\rho)\right)$ and hence $\fR_A(\rho)=\fR_\mc{A}^{\max}-D\left(\rho||\Phi_A(\rho)\right)$. We now use the fact that $D\left(\rho||\Phi_A(\rho)\right)\leqslant \ln{d_\mc{A}}$
(see Lemma \ref{lemma_bound}, Appendix \ref{appendixB}), to set $\fR_\mc{A}^{\max}=\ln{d_\mc{A}}$. This yields the reality quantifier
\be\label{R_vonNeumann}
\fR_A(\rho)=\ln d_\mc{A}-D\left(\rho||\Phi_A(\rho)\right),
\ee
which is such that $\fR_A(\rho)\geqslant 0$ and $\fR_A\left(\Phi_A(\rho)\right)=\ln{d_\mc{A}} $, as required by Axiom \ref{A_reality_meas}. A particularly interesting property of the reality quantifier \eqref{R_vonNeumann} is that it allows us to formally state a complementarity relation. To see this, we can employ Lemmas \ref{lemma_f} and \ref{lemma_bound} (see Appendix \ref{appendixB}) to demonstrate that $D(\rho||\Phi_A(\rho))=S(\Phi_A(\rho))-S(\rho)\eqqcolon \fI_A(\rho)$, where $\fI_A(\rho)$ is the \textit{irreality} (indefinite reality) of the observable $A$ given the state $\rho$, as originally proposed by BA \cite{bilobran2015}. We then have
\be\label{RandI}
\fR_A(\rho)+\fI_A(\rho)=\ln d_\mathcal{A}.
\ee
It becomes clear now the duality between irreality---a quantum resource {\it per se} \cite{costa2020}---and reality, which can thus be viewed as the amount of quantum resource that is destroyed when an observable is measured.

Now we show that the quantifier \eqref{R_vonNeumann} does satisfy Definition \ref{def_Rmeasure}, which characterizes it as a reality measure. Axiom \ref{A_reality_inf} was of course satisfied by construction. From DPI, we can check that any quantum channel $\Lambda$ that commutes with $\Phi_A$ for every $\rho$, that is, $\Lambda(\Phi_A(\rho))=\Phi_A(\Lambda(\rho))$, will never decrease the $A$ reality. This includes monitoring maps $\mc{M}_A^\epsilon$ of any intensity and the discarding of parts of the system that do not include $\mc{A}$. The Axioms \ref{A_reality_meas} and \ref{A_role_parts}(a) are therefore satisfied. Along with the fact that $\Phi_A(\rho\otimes\Omega)=\Phi_A(\rho)\otimes\Omega$, additivity [Eq.~\eqref{additivity}] guarantees that the quantifier \eqref{R_vonNeumann} satisfies the Axiom \ref{A_role_parts}(b). Therefore, we have $\fR_A\left(\Gamma(\rho)\right)\geqslant \fR_A(\rho)$, confirming that realistic operations $\Gamma$ cannot make realism decrease. That Axiom \ref{A_uncertainty} is respected follows from Lemma \ref{lemma_uncertainty} (Appendix \ref{appendixB}). 
At this point, it is opportune to remark how quantum correlations influence the realism uncertainty relation (Axiom \ref{A_uncertainty}). It has been shown in Ref. \cite{bilobran2015} that $\fI_A(\rho)=\fI_A(\rho_\mc{A})+\mc{D}_A(\rho)$, where $\mc{D}_A(\rho)=I_\mc{A:B}(\rho)-I_\mc{A:B}\left(\Phi_A(\rho)\right)$ is the (nonoptimized) quantum discord associated with the observable $A$ and $I_\mc{A:B}=D(\rho||\rho_\mc{A}\otimes\rho_\mc{B})$ is the mutual information. Since $\mc{D}_A(\rho)\geqslant \min_A\mc{D}_A(\rho)\equiv \mf{D}_\mc{A}(\rho)$, where $\mf{D}_\mc{A}$ stands for the one-sided quantum discord, one can conclude that $\mc{D}_X(\rho)+\mc{D}_Y(\rho)\geqslant 2\mf{D}_\mc{A}(\rho)$, with equality holding, for instance, for product states. Combining $\fI_A(\rho_\mc{A})=D\left(\rho_\mc{A}||\Phi_A(\rho_\mc{A})\right)\geqslant 0$ with Eq. \eqref{RandI}, we can verify that $\fR_A(\rho)\leqslant \ln{d_\mc{A}}-\mc{D}_\mc{A}(\rho)$ and
\be 
\fR_X(\rho)+\fR_Y(\rho)\leqslant  2\left[\ln{d_\mc{A}}-\mf{D}_\mc{A}(\rho)\right].
\ee 
This shows that quantum correlations, as measured by quantum discord (entanglement for pure states) forbid $X$ and $Y$ to be simultaneous elements of reality. Accordingly, using a nuclear magnetic resonance platform and associating $X$ and $Y$ with wave- and particle-like observables, researchers have recently reported on an experiment where an entangled quantum system behaves neither as a wave nor as particle \cite{dieguez2021}. The validity of Axiom \ref{A_mix} (mixing) comes immediately from joint convexity  [Eq.~\eqref{convexity}]. With respect to Axiom \ref{A_add} (additivity), we should first note that $\fR_A(\rho^{\otimes n})\coloneqq\ln d_\mathcal{A}^n-D(\rho^{\otimes n}||\Phi_A(\rho)^{\otimes n})$, that is, $A$ is presumed to act over each copy of $\rho$. It then follows from the identity \eqref{additivity} that $\fR_A(\rho^{\otimes n})=n\fR_A(\rho)$. Last but not least, to verify the validity of Axiom \ref{A_flag} (flagging), we start with $\fR_A(\rho_f)=\ln{d_\mc{A}}-S(\Phi_A(\rho_f))+S(\rho_f)$ (see the proof of Lemma \ref{lemma_bound}, Appendix \ref{appendixB}), with the flagged state $\rho_f=\sum_ip_i\rho_i\otimes\ket{x_i}\bra{x_i}$. The joint entropy theorem yields $S(\rho_f)=H(\{p_i\})+\sum_i p_i S(\rho_i)$, where $H(\{p_i\})=-\sum_ip_i\ln p_i$ is the Shannon entropy of the distribution $p_i$ \cite{nielsen2000}. Direct calculations gives $\fR_A(\rho_f)=\sum_ip_i\fR_A(\rho_i)$ with $\fR_A(\rho_i)=\ln d_\mathcal{A} -S(\Phi_A(\rho_i))+S(\rho_i)$, which proves the point. With all that, it becomes established that the quantifier  \eqref{R_vonNeumann} does indeed satisfy Definition \ref{def_Rmeasure} and can hereafter be called a reality measure. 

Referring back to Axiom \ref{A_reality_meas}, it is worth discussing how the reality measure \eqref{R_vonNeumann} changes upon monitoring maps [Eq.~\eqref{monitoring}]. First, because the reality measure respects Axiom \ref{A_mix} (mixing), which ultimately is a statement of concavity, one can readily show that $\fR_A(\mc{M}_A^\epsilon(\rho))\geqslant (1-\epsilon)\,\fR_A(\rho)+\epsilon\,\fR_A\left(\Phi_A(\rho)\right)$. Then, by use of Eq.~\eqref{RandI} we arrive at
\be
\fR_A\left(\mc{M}_A^\epsilon(\rho)\right)-\fR_A(\rho)\geqslant\epsilon\,\fI_A(\rho).
\ee
This shows that a monitoring of $A$ always increases the $A$ reality as long as there is a nonzero amount of $A$ irreality \cite{dieguez2018}. Second and more surprising, it turns out that the monitoring of an observable $Y\in\mf{B}(\mc{H_A})$ never diminishes the reality of another observable $X\in\mf{B}(\mc{H_A})$, that is,
\be
\fR_X\left(\mc{M}_Y^\epsilon(\rho)\right)-\fR_X(\rho)\geqslant 0,
\ee
$\forall\,\epsilon\in[0,1]$, whenever the $X$ and $Y$ eigenstates form MUB. This is one of the main results of Ref. \cite{dieguez2018}, but the reader can find a simpler alternative proof of it based on DPI and mixing in Appendix \ref{appendixB} (see Lemma \ref{lemma_delta}). Notice that the above inequality generalizes Axiom \ref{A_reality_meas}.

Finally, it is worth noticing that the so-called local irreality, $\mf{I}_A(\rho_\mc{A})=S\big(\Phi_A(\rho_\mc{A})\big)-S(\rho_\mc{A})$, which relates to irreality through the formula $\mf{I}_A(\rho)=\mf{I}_A(\rho_\mc{A})+\mc{D}_A(\rho)$, is nothing but the measure known as {\it relative entropy of coherence}~\cite{baumgratz2014}, which has been acknowledged as a quantum resource~\cite{streltsov2017}. This shows that quantum irrealism is induced by both types of ``quantumness'', namely, quantum coherence and quantum correlations. In particular, in the absence of correlations, one has $\mf{I}_A(\rho_\mc{A}\otimes\rho_\mc{B})=\mf{I}_A(\rho_\mc{A})$, showing that coherence is sufficient to preclude classical reality. Within the coherence theory of multipartite settings, the irreality $\mf{I}_A(\rho)$ turns out to be equivalent to the concept known as {\it quantum-incoherent relative entropy}~\cite{chitambar2016}. These connections between quantum irrealism and quantum coherence measures just reinforce that $\mf{I}_A(\rho)$ is a sensible quantifier of the former concept, for quantum superposition (coherence) is the fundamental mechanism responsible for the departure of the natural behavior from classical reality.

%=====================================
\subsection{R\'enyi reality monotones}

We now derive a reality quantifier based on the nonoptimized conditional information \eqref{Icond_Renyi_up}. Because this quantity and the von Neumann relative entropy share properties such as positive definiteness, unitary invariance, and additivity, we can rigidly follow the steps of the precedent section, which amounts to using Theorem \ref{theorem_uni} and Axiom \ref{A_reality_inf}, to directly propose the R\'enyi reality quantifier
\be\label{R_renyi_down}
\fR_A^{\alpha\down}(\rho)=\ln d_\mc{A}-D_\alpha\left(\rho||\Phi_A(\rho)\right),
\ee
for $\alpha\in(0,1)\cup(1,+\infty)$. Since $\lim_{\alpha\to 1}\fR_A^{\alpha\down}(\rho)=\fR_A(\rho)$ for any $\rho$ and $A$, we have here an evident generalization of \eqref{R_vonNeumann} within the R\'enyi quantum information theory. Inspired by the results of the previous section, we have chosen $\fR_\mc{A}^{\max}=\ln d_\mathcal{A}$ to make the quantity \eqref{R_renyi_down} always nonnegative (in particular, for $\alpha\to 1$). 

As we show now, the quantifier \eqref{R_renyi_down} is a reality monotone only in the restricted range $\alpha\in(0,1)$. Axioms \ref{A_reality_meas} and \ref{A_role_parts}(a) are satisfied whenever DPI is valid, in this case, for $\alpha\in(0,1)\cup(1,2]$. Axioms \ref{A_role_parts}(b) and \ref{A_uncertainty} are validated by additivity and positive definiteness, respectively. Axiom \ref{A_mix}, however, holds only when $D_\alpha$ is jointly convex, that is, for $\alpha\in(0,1)$. This significantly restricts the domain wherein $\fR_A^{\alpha}$ can be termed a reality monotone. Although additivity guarantees the Axiom \ref{A_add} to be respected by the monotone \eqref{R_renyi_down}, there is no answer yet as to whether or not $D_\alpha$ satisfies flagging. Only in the affirmative case could we regard $\fR_A^{\alpha}$ as a reality measure for $\alpha\in(0,1)$.

Since the R\'enyi divergence is a monotonically increasing real function of $\alpha$, for all $\alpha>0$ and fixed density operators \cite{mosonyi2011}, the reality measure \eqref{R_renyi_down} is a monotonically decreasing real function of its parameter, meaning that
\be\label{monotonicity}
\fR_A^{\alpha\down}(\rho)\geqslant\fR_A^{\beta\down}(\rho)
\ee
for real nonnegative numbers $\alpha\leqslant\beta$. This entails that if $\fR_A^{\beta\down}(\rho)=\ln d_\mathcal{A}$ for some $\beta\geqslant 0$, meaning that $\rho=\Phi_A(\rho)$, then $\fR_A^{\alpha\down}(\rho)=\ln d_\mathcal{A}$ for every $\alpha\leqslant \beta$. If, in addition, $\beta\to 1$, then all R\'enyi reality monotones will numerically reach the maximum $\ln{d_\mc{A}}$. Therefore, although R\'enyi reality monotones with different parameters $\alpha$ disagree in value when applied to nonreal observables (those for which $\fR_A^{\alpha\down}(\rho)<\ln d_\mathcal{A}$), they do always agree about states of reality (see Example \ref{ex_mu} and respective Fig. \ref{fig:RRmu} below).

One of the consequences of the positive definiteness property---which does not hold when we use the min-relative entropy---is that $\fR_A^{\alpha\down}(\rho)=0$ if and only if $\rho=\sum_ip_iA_i\otimes\rho_{\mc{B}|i}=\Phi_A(\rho)$, which is a classical-quantum state with zero one-sided quantum discord. This means that the lack of quantum correlations is a condition necessary for the occurrence of at least one element of reality. On the other hand, classical reality manifests itself for the preparation $\rho=(\mbb{1}_\mc{A}/d_\mc{A})\otimes\rho_\mc{B}$, since in this case we have $\fR_A^{\alpha\down}(\rho)=\ln{d_\mc{A}}$ for any $A$.

Next, we present some case studies.

\begin{figure}[htb]
\centering
\includegraphics[width=0.9\linewidth]{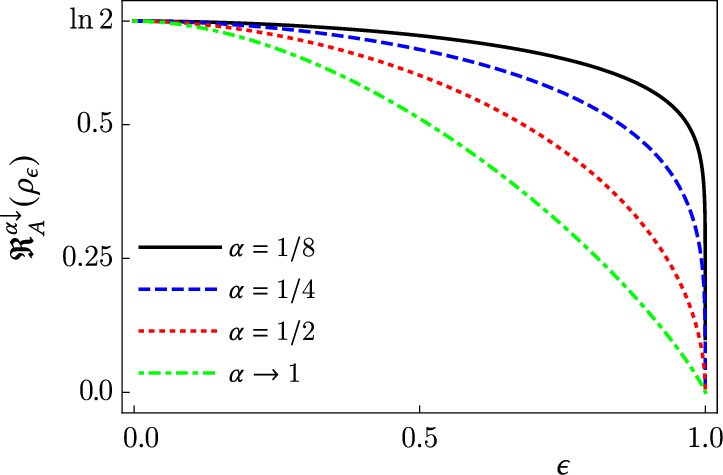}
\caption{Reality measure (dash-dotted green line) and R\'enyi monotones $\fR_A^{\alpha\down}(\rho_\epsilon)$ for any spin observable $A$ of the first qubit of a Werner state [Eq. \eqref{werner}] as a function of the purity parameter $\epsilon$ (as introduced in Example \ref{ex_werner}) for: $\alpha= 1/8$ (solid black line), $\alpha=1/4$ (dashed blue line), $\alpha= 1/2$ (dotted red line), and $\alpha\to 1$ (dash-dotted green line).}
\label{fig:RRwerner}
\end{figure}
\begin{figure*}[htb]
    \centering
    \includegraphics[width=0.9\linewidth]{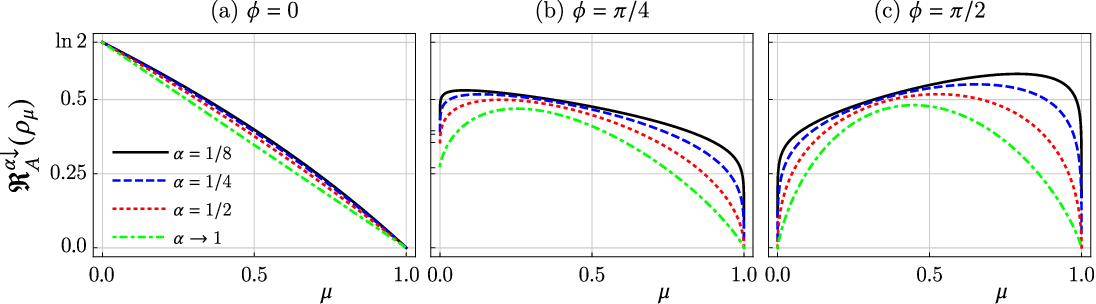}
    \caption{Reality monotones $\fR_A^{\alpha\down}(\rho_\mu)$ of the spin observable $A=\hat{u}\cdot\vec{\sigma}$, where $\hat{u}=(\cos\theta\sin\phi,\sin\theta\sin\phi,\cos\theta)$, regarding the first qubit of the $\rho_\mu$ state \eqref{mu_state} as a function of $\mu$ as introduced in Example \ref{ex_mu} for any $\theta$ and (a) $\phi=0$, (b) $\phi=\pi/4$, and (c) $\phi=\pi/2$ and for (from top to bottom): $\alpha= 1/8$ (solid black line), $\alpha=1/4$ (dashed blue line), $\alpha= 1/2$ (dotted red line), and $\alpha\to 1$ (dash-dotted green line).}
    \label{fig:RRmu}
\end{figure*}
\begin{example}\label{ex_werner}
Let $\rho_\epsilon\in\mathfrak{B}(\mathcal{H_A\otimes H_B})$ be the Werner state 
\be\label{werner}
\rho_\epsilon=(1-\epsilon)\tfrac{\mbb{1}}{4}+\epsilon\,\psi_s,
\ee
where $\epsilon\in[0,1]$, $\psi_s=\ket{\psi_s}\bra{\psi_s}$, and $\ket{\psi_s}=(\ket{01}-\ket{10})/\sqrt{2}$ is the singlet state. To assess the reality degree of the spin observable $A=\hat{u}\cdot\vec{\sigma}$ acting on $\mathcal{H_A}$, with $\vec{\sigma}=(\sigma_x,\sigma_y,\sigma_z)$ being the Pauli vector, we take the projectors $A_\pm=(\mbb{1}_\mc{A}\pm\hat{u}\cdot\vec{\sigma})/2$ with $\hat{u}=(\cos\theta\sin\phi,\sin\theta\sin\phi,\cos\phi)$ and then compute the $A$-reality state $\Phi_A(\rho_\epsilon)=A_+\rho_\epsilon A_++A_-\rho_\epsilon A_-$. Because $\rho_\epsilon$ is rotationally invariant, it commutes with $\Phi_A(\rho_\epsilon)$ and we have $\tD_\alpha(\rho_\epsilon||\Phi_A(\rho_\epsilon))=D_\alpha(\rho_\epsilon||\Phi_A(\rho_\epsilon))$ for $\alpha\in(0,1)\cup(1,+\infty)$. Therefore, both the original and the sandwiched R\'enyi divergences can be used within the range $\alpha\in(0,1)$ to provide a reality monotone for the Werner state. By direct calculation of $\fR_A^{\alpha\down}(\rho_\epsilon)=\ln 2-D_\alpha(\rho_\epsilon||\Phi_A(\rho_\epsilon))$ we find
\be\label{Rdown_werner}
\fR_A^{\alpha\down}(\rho_\epsilon)=\ln 2-
    \ln\left[\frac{(1-\epsilon)^\alpha+(1+3\epsilon)^\alpha}{4(1+\epsilon)^{\alpha-1}}+\frac{1-\epsilon}{2} \right]^\frac{1}{\alpha-1}.
\ee
Note that the special cases
\begin{subequations}
\be
\lim_{\alpha\to 0}\fR_A^{\alpha\down}(\rho_\epsilon)=\left\{ \begin{array}{ll}
    \ln 2 & \text{if $\epsilon\in[0,1)$ }, \\
    0 & \text{if $\epsilon=1$},
\end{array}\right.
\ee
\be\label{R_infinity}
\lim_{\alpha\to +\infty}\fR_A^{\alpha\down}(\rho_\epsilon)=\ln 2 -\ln\left(\frac{1+3\epsilon}{1+\epsilon} \right),
\ee
\end{subequations}
do not constitute reality monotones, since they do not satisfy Axioms \ref{A_reality_meas} and \ref{A_mix}, respectively. See Appendix \ref{appendix_cases} for the technical details on how to calculate the R\'enyi divergences when $\alpha\to 0$ and $+\infty$. As we can see in Fig. \ref{fig:RRwerner}, the monotonicity property \eqref{monotonicity} is verified. Also, since the R\'enyi monotone is concave (due to the mixing axiom) and $\fR_A^{\alpha\down}(\psi_s)=0$, then  $\fR_A^{\alpha\down}(\rho_\epsilon)\geqslant (1-\epsilon)\ln 2$. This result manifests itself in Fig. \ref{fig:RRwerner} through the concavity of the curves, a feature that is not respected by the convex function \eqref{R_infinity}.
\end{example}
\begin{example}\label{ex_mu}
Let us consider now the one-parameter two-qubit state $\rho_\mu\in\mathfrak{B}(\mathcal{H_A\otimes H_B})$ defined as
\be\label{mu_state}
\rho_\mu=\tfrac{\mbb{1}}{4}+\tfrac{\mu}{4}\left(\sigma_x\otimes\sigma_x-\sigma_y\otimes\sigma_y \right) + \tfrac{2\mu-1}{4}\sigma_z\otimes\sigma_z,
\ee
where $\mu\in[0,1]$. This state is such that $\rho_{\mu=1}=\ket{\varphi}\bra{\varphi}$, with $\ket{\varphi}\equiv(\ket{00}+\ket{11})/\sqrt{2}$, and  $\rho_{\mu=0}=(\ket{01}\bra{01}+\ket{10}\bra{10})/2$. Unlike the previous example, for the observable $A=\hat{u}\cdot\vec{\sigma}$ it follows that $\rho_\mu$ does not always commute with $\Phi_A(\rho_\mu)$. Indeed, our calculations show that $\fR_A^{\alpha\down}(\rho_\mu)$ (whose lengthy and nonenlightening expression will be omitted) depends on the polar angle $\phi$, as shown in Fig. \ref{fig:RRmu}. Again, the monotonicity relation \eqref{monotonicity} makes itself clear. Also, numerical simulations show that in this case a reality monotone based on $\tD_\alpha$ for $\alpha\in[1/2,1)$ behaves very similarly to what is presented in Fig. \ref{fig:RRmu}.
\end{example}

Now we turn our attention to the optimized version \eqref{Icond_Renyi_down} of the R\'enyi conditional information. Here the derivation of a reality monotone becomes subtler because the optimization process does not keep a unique and straightforward connection with the self-contained dynamical scenario prescribed by Axiom \ref{A_reality_inf}. Still, some potential candidates can be proposed. Let us consider again the initial state $\upsilon_0$. Plugged into Eq. \eqref{Icond_Renyi_down}, it yields $I_\mc{E|S}^{\alpha\down}(\upsilon_0)= \inf_{\sigma_\mc{S}}D_\alpha\left(\rho\otimes\ket{e_0}\bra{e_0}||\sigma_\mc{S}\otimes\mbb{1}_\mc{E}/d_\mc{E}\right)=\ln{d_\mc{E}}$, where we have used additivity and $\inf_{\sigma_\mathcal{S}}D_\alpha(\rho||\sigma_\mathcal{S})=0$. By applying $U_t$, we find $I_\mc{E|S}^{\alpha\down}(\upsilon_t)= \inf_{\sigma_\mc{S}}D_\alpha\left(\upsilon_t||\sigma_\mc{S}\otimes\mbb{1}_\mc{E}/d_\mc{E}\right)$. As before, we assume that $U_t$ is such that $\rho_t=\Tr_\mc{E}(\upsilon_t)=\Phi_A(\rho)$. With that, we obtain $\fR_A^{\alpha\up}(\rho_t)=\fR_\mc{A}^{\max}=\ln{d_\mc{A}}$ and, via Axiom \ref{A_reality_inf}, 
\be\label{R_renyi_up}
\fR_A^{\alpha\up}(\rho)=\ln d-\inf_{\sigma_\mc{S}}D_\alpha\left(\upsilon_t\bigdbar\sigma_\mathcal{S}\otimes\tfrac{\mbb{1}_\mc{E}}{d_\mc{E}}\right),
\ee
where $\upsilon_t=U_t\left(\rho\otimes \ket{e_0}\bra{e_0}\right)U_t^\dag$, $d=d_\mc{A}d_\mc{E}$, and $d_\mc{E}=d_\mc{A}$. That this quantifier is indeed a reality monotone for $\alpha\in(0,1)$ (when $\fR_A^{\alpha\down}$ is too) is demonstrated in Appendix \ref{appendix_opt}. A disadvantage of $\fR_A^{\alpha\up}$ in comparison with $\fR_A^{\alpha\down}$ is the presence of the environment state $\ket{e_0}$ and the observable-dependent unitary operator $U_t$, whose formal structure is provided in the proof of Theorem \ref{theorem_uni} (Appendix \ref{appendixB}). What is more, the optimization process may be  impracticable, specially if the sandwiched R\'enyi divergence $\tD_\alpha$ is used instead of $D_\alpha$. Interestingly, however, by use of the quantum Sibson identity (see the supplemental material of Ref. \cite{sharma2013}), we obtain the closed form $\inf_{\sigma_\mc{B}}D_\alpha(\rho||\mbb{1}_\mc{A}\otimes\sigma_\mc{B})=\frac{\alpha}{\alpha-1}\ln\Tr_\mc{B}\left\{\left[\Tr_\mc{A}\left(\rho^\alpha\right) \right]^{1/\alpha} \right\}$, which allows us to simplify Eq. \eqref{R_renyi_up} as
\be\label{R_renyi_up_simp}
\fR_A^{\alpha\up}(\rho)=\ln{d_\mc{A}}-\frac{\alpha}{\alpha-1}\ln\Tr_\mc{A}\left\{\left[\Tr_\mathcal{E}\left(\upsilon_t^\alpha \right) \right]^{1/\alpha} \right\},
\ee
for $\alpha\in(0,1)$. An example is opportune.
\begin{example}
Computing the monotone \eqref{R_renyi_up_simp} for the Werner state \eqref{werner} yields the result $\fR_A^{\alpha\up}(\rho_\epsilon)=\ln 2-\frac{\alpha}{\alpha-1}\ln\chi$, where
\be 
\chi=\frac{1-\epsilon}{2}+\left[\frac{(1+\epsilon)^\alpha+(1+3\epsilon)^\alpha}{2^{\alpha+1}} \right]^{1/\alpha}
\ee 
and $\alpha\in(0,1)$. Figure \ref{fig:RRdif} illustrates the slight differences between the monotones \eqref{R_renyi_up_simp} and \eqref{R_renyi_down}, which always respect $\fR_A^{\alpha\up}(\rho_\epsilon)\geqslant\fR_A^{\alpha\down}(\rho_\epsilon)$, as expected.
\end{example}
\begin{figure}[htb]
\centering
\includegraphics[width=0.8\linewidth]{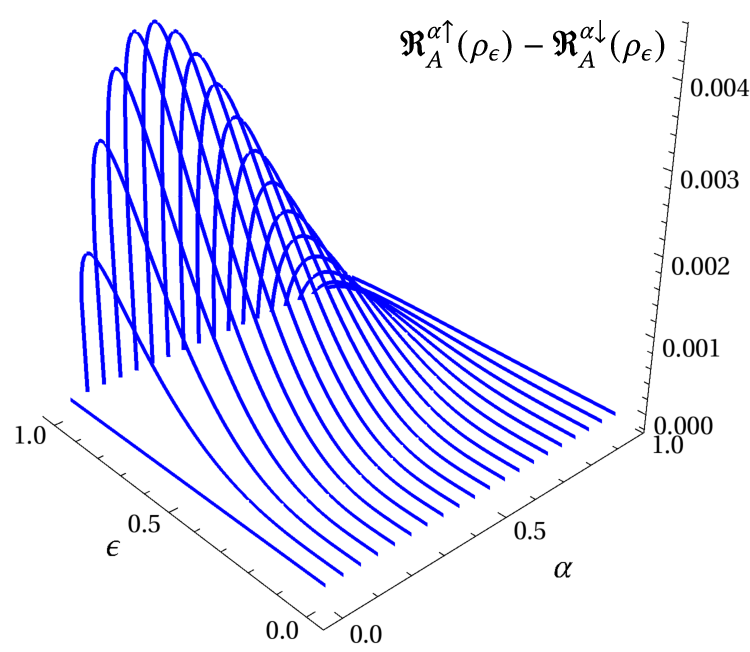}
\caption{Differences $\fR_A^{\up\alpha}(\rho_\epsilon)-\fR_A^{\down\alpha}(\rho_\epsilon)$ for the Werner state \eqref{werner} as a function of the purity parameter $\epsilon\in(0,1)$ and $\alpha\in(0,1)$. The maximum difference, $\sim 0.0044$, is reached when $(\alpha,\epsilon)\sim(0.24,0.89)$.}
\label{fig:RRdif}
\end{figure}

Roughly speaking, the divergences in Eq.~\eqref{R_renyi_up} evaluate the ``minimum distance'' (according to an ``entropic metric'') between the time evolved state $\upsilon_t$ and the product $\sigma_\mc{S}\otimes\mbb{1}_\mc{E}/d_\mc{E}$. One might argue, however, that it would be more reasonable to run the optimization, at every instant of time, within a set more closely related with the reduced state $\Tr_\mc{E}(\upsilon_t)$, which is strictly confined to the dynamics imposed by $U_t$. Adhering to this rationale, we start over by proposing the following adaptation in the optimized conditional information \eqref{Icond_Renyi_down}:
\be 
I_\mc{E|S}^{\alpha\down}(U_t\upsilon_0U_t^\dag)=\inf_{\sigma_\mc{S}}D_\alpha\left(U_t\upsilon_0U_t^\dag\bigdbar\Tr_\mc{E}\left(U_t\eta_0 U_t^{\dag}\right)\otimes\tfrac{\mbb{1}_\mc{E}}{d_\mc{E}}\right).\label{Icond_Renyi_new}
\ee 
where $\eta_0=\sigma_\mc{S}\otimes\ket{e_0}\bra{e_0}$ and $\sigma_\mc{S}\in\mf{B}(\mc{H_S})$.
For $t=0$ we see that no significant change is implied to the original definition. By use of additivity and $\upsilon_0=\rho\otimes\ket{e_0}\bra{e_0}$, we easily obtain $I_\mc{E|S}^{\alpha\down}(\upsilon_0)=\ln d_\mc{E}$. Nevertheless, for $t>0$ the optimization runs only over the initial state $\sigma_\mc{S}$. This preserves the dynamics imposed by $U_t$ and avoids any artificial freedom that would otherwise be tested throughout the minimization process. Noticing that $\Tr_\mc{E}\left(U_t\eta_0 U_t^{\dag}\right)=\Phi_A(\sigma_\mc{S})$, we can employ Theorem \ref{theorem_uni} (Appendix \ref{appendixB}), unitary invariance, and additivity to show that $I_\mc{E|S}^{\alpha\down}(\upsilon_t)=\ln{d_\mc{E}}+\inf_{\sigma_\mc{S}}D_\alpha\left(\rho||\Phi_A(\sigma_\mc{S})\right)$. Employing Axiom \ref{A_reality_inf} with $\bar{\fR}_A^{\alpha}(\rho_t)=\ln{d_\mc{A}}$ gives
\be\label{R_renyi_bar}
\bar{\fR}_A^{\alpha}(\rho)=\ln{d_\mc{A}}-\inf_{\sigma_\mc{S}}D_\alpha\left(\rho||\Phi_A(\sigma_\mc{S})\right).
\ee
This is exactly the result we would obtain by restricting the optimization in Eq. \eqref{R_renyi_up} to the set of $A$-reality states, that is, the one constituted by states satisfying $\sigma_\mc{S}=\Phi_A(\sigma_\mc{S})$.

From the discussion conducted up until now, one can conclude that
$\fR_A^{\alpha\down}(\rho)\leqslant\bar{\fR}_A^{\alpha}(\rho)\leqslant\fR_A^{\alpha\up}(\rho)$. This is not to say, however, that we have mathematical evidence that $\bar{\fR}_A^\alpha$ and $\fR_A^{\alpha\down}$ are distinct quantities. On the contrary, we do have evidence that they are equal when $\alpha\to 1$. To show this, we use Lemma \ref{lemma_f} to obtain $D\left(\rho||\Phi_A(\sigma_\mc{S})\right)=D\left(\rho||\Phi_A(\rho)\right)+D\left(\Phi_A(\rho)||\Phi_A(\sigma_\mc{S})\right)$, which gives $\inf_{\sigma_\mc{S}}D\left(\rho||\Phi_A(\sigma_\mc{S})\right)=D\left(\rho||\Phi_A(\rho)\right)$. This readily implies that, $\bar{\fR}_A^\alpha=\fR_A^{\alpha\down}$ as $\alpha\to 1$. Although we expect for the definitive solution to this problem, we can safely announce the R\'enyi reality monotone $\fR_A^\alpha(\rho)$ in the form
\be \label{R_reyni}
\fR_A^\alpha\in\{\fR_A^{\alpha\down},\bar{\fR}_A^\alpha,\fR_A^{\alpha\up} \}
\ee 
for $\alpha\in(0,1)$, with their respective formulas \eqref{R_renyi_down}, \eqref{R_renyi_bar}, and \eqref{R_renyi_up}, and $\fR_A^{\alpha\down}$ thus being a lower bound for the R\'enyi $A$-reality.

\begin{remark}\label{Rminmax}
Very similar arguments can be made toward the establishment of  reality quantifiers such as $\fR_A^{\min}$, $\tfR_A$, and $\fR_A^{\max}$. Operationally, they can be directly obtained through the replacement of $D_\alpha$ in Eq.~\eqref{R_renyi_down} by the respective divergences $D_{\min}\coloneqq \lim_{\alpha\to 0}D_\alpha$, $\tD_\alpha$ (the sandwiched R\'enyi divergence), and $D_{\max}\coloneqq \lim_{\alpha\to +\infty}\tD_\alpha$. The reader can find in Table \ref{tab:properties} a summary of the properties that are satisfied by these divergences and, in Table \ref{tab:axioms}, the axioms respected by the corresponding reality quantifiers. It turns out, though, that only $\tfR_A$ works as a reality monotone for some values of $\alpha$.
\end{remark}

%=====================================
\subsection{Tsallis reality monotones}

Unlike the R\'enyi divergence \eqref{renyi}, the Tsallis relative entropy \eqref{tsallis} is not additive on its entries. Yet, we show now that it is still possible to construct a reality monotone in this information theory. Aiming at accounting for Axiom \ref{A_reality_inf}, we employ Theorem \ref{theorem_uni}, unitary invariance, and definitions \eqref{tsallis} and \eqref{Icond_Tsallis} to demonstrate that $I_\mc{E|S}^q(\upsilon_t)-I_\mc{E|S}^q(\upsilon_0)=D_q\left(\rho||\Phi_A(\rho)\right)$, where $\upsilon_0=\rho\otimes\ket{e_0}\bra{e_0}$. Now, we note that for $\rho=\ket{\psi}\bra{\psi}\otimes\mbb{1}_\mc{B}/d_\mc{B}$ and $\Phi_A(\psi)=\mbb{1}/d_\mc{A}$ we find $D_q(\rho||\Phi_A(\rho))=d_\mc{A}^{q-1}S_q(\mbb{1}_\mc{A}/d_\mc{A})$, which also manifests the normalization issue. Selecting a unitary evolution such that $\fR_A^q(\rho_t)=\fR_A^q\left(\Phi_A(\rho)\right)=\ln_q{d_\mc{A}}$, we then propose the quantifier
\be\label{R_tsallis} \fR_A^q(\rho)=\ln_q{d_\mc{A}}-d_\mc{A}^{1-q}D_q\left(\rho||\Phi_A(\rho)\right),
\ee
which reduces to its von Neumann counterpart \eqref{R_vonNeumann} as $q\to 1$. Table \ref{tab:properties}, along with the fact that $D_q(\rho\otimes\Omega||\sigma\otimes\Omega)=D_q(\rho||\sigma)$, shows that likewise the R\'enyi divergences, $D_q$ satisfies all properties necessary for one to validate $\fR_A^q$ as a reality monotone in the domain $q\in(0,2]$. On the other hand, even if flagging comes to eventually be proved for the Tsallis reality monotone, the lack of additivity already guarantees that $\fR_A^q$ will never be classified as a reality measure. See Table \ref{tab:axioms} for a summary of the axioms held by the Tsallis reality monotone \eqref{R_tsallis} with respect to the parameter $q$. Next, we present a brief case study.

\begin{example}
Let us take again the Werner state \eqref{werner}. A lengthy but direct calculation of \eqref{R_tsallis} yields
\be
\fR_A^q(\rho_\epsilon)=\ln_q{2}-\frac{(1-\epsilon)^q-2(1+\epsilon)^q+(1+3\epsilon)^q}{4(q-1)\,[2(1+\epsilon)]^{q-1}},
\ee
for $q\in(0,1)\cup(1,2]$. As in Example \ref{ex_werner}, due to the rotational invariance of the singlet state it follows that $\fR_A^q(\rho_\epsilon)$ actually is observable independent. See Fig. \ref{fig:RTwerner} for numerical illustrations of the above formula. It can be checked that $\partial_q\fR_A^q(\rho_\epsilon)\leqslant 0$, meaning that $\fR_A^q(\rho_\epsilon)\geqslant \fR_A^p(\rho_\epsilon)$ for $q\leqslant p$.
\end{example}
\begin{figure}[htb]
\centering
\includegraphics[width=0.9\linewidth]{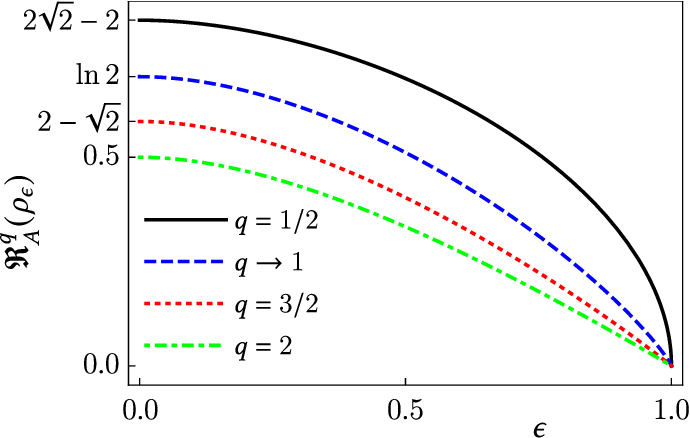}
\caption{Tsallis reality monotone $\fR_A^q(\rho_\epsilon)$ for any spin observable $A$ of the first qubit of a Werner state \eqref{werner} as a function of the purity parameter $\epsilon$ (as introduced in Example \ref{ex_werner}) for: $q=1/2$ (solid black line), $q\to 1$ (dashed blue line), $q=3/2$ (dotted red line) and $q=2$ (dash-dotted green line).}
\label{fig:RTwerner}
\end{figure}

\newcolumntype{A}{l}
\newcolumntype{B}{>{\centering}p{0.085\textwidth}}
\newcolumntype{C}{c}
\begin{table*}[ht]
\renewcommand{\arraystretch}{1.5}
\centering
\begin{tabular}{ABCBCBC}
\hline
& $\fR_A$ & $\fR_A^{\alpha}$ & $\fR_A^{\min}$ &  $\tfR_A^{\alpha}$ & $\fR_A^{\max}$ & $\fR_A^q$ \\
\hline
Axiom \ref{A_reality_inf} (information flow) & \cmark & \cmark & \cmark & \cmark & \cmark & \cmark \\
Axiom \ref{A_reality_meas} (measurements) & \cmark & \cmark & \xmark & \cmark & \cmark & \cmark \\
Axiom \ref{A_role_parts}(a)  (part discard) & \cmark & $\alpha\in(0,1)\cup(1,2]$ & \cmark & $\alpha\in[1/2,1)\cup(1,+\infty)$ & \cmark & $q\in(0,1)\cup(1,2]$\\
Axiom \ref{A_role_parts}(b)  (uncorrelated part) & \cmark & \cmark & \cmark & \cmark & \cmark & \cmark\\
Axiom \ref{A_uncertainty} (uncertainty relation) & \cmark & \cmark & \xmark & \cmark & \cmark & \cmark\\
Axiom \ref{A_mix} (mixing) & \cmark & $\alpha\in(0,1)$ & \cmark & $\alpha\in[1/2,1)$ & \xmark & $q\in(0,1)\cup(1,2]$ \\
\hline
Axiom \ref{A_add} (additivity) & \cmark & \cmark & \cmark & \cmark & \cmark & \xmark\\
Axiom \ref{A_flag} (flagging) & \cmark & ? & ? & ? & \xmark & ?\\
\hline
\end{tabular}

\caption{Summary of the axioms satisfied by the $A$-reality quantifiers $\fR_A$ [Eq.~\eqref{R_vonNeumann}], $\fR_A^{\alpha}$ [Eq.~\eqref{R_reyni}], $\fR_A^{\min}$, $\tfR_A^{\alpha}$, $\fR_A^{\max}$ (see Remark \ref{Rminmax}), and $\fR_A^q$ [Eq.~\eqref{R_tsallis}] built out of their corresponding divergences $D$, $D_\alpha$, $D_{\min}$, $\tD_\alpha$, $D_{\max}$, and $D_q$, whose properties are listed in Table \ref{tab:properties}. For some specific parameter domains, our approach legitimates several reality monotones, namely, $\fR_A^{\alpha}$, $\tfR_A^{\alpha}$, and $\fR_A^q$, while only $\fR_A$ can be validated (up to now) as a reality measure.}

\label{tab:axioms}

\end{table*}

%================================================
\section{Concluding remarks}\label{secDiscussion}

Inspired by a significant amount of theoretical and experimental works regarding the emergence of classicality from the quantum substratum \cite{dieguez2018,zurek2009,ciampini2018,chen2019,unden2019,bilobran2015,gomes2018,orthey2019,fucci2019,engelbert2020,mancino2018,dieguez2021,costa2020}, here we propose an axiomatization for the concept of quantum realism. This notion is different from classical reality in a very fundamental manner, namely, noncommuting observables cannot be simultaneous elements of reality in general (Axiom \ref{A_uncertainty}). Our core premise, implemented via Axiom \ref{A_reality_inf}, is the one permeating the aforementioned literature: an observable $A$ emerges as an element of the physical reality only when another degree of freedom encodes information about it. By its turn, Axiom \ref{A_reality_meas} highlights the role of measurements to quantum realism. In full consonance with Axiom \ref{A_reality_inf}---for measurements can be viewed as dynamical processes through which an apparatus get to know about the measured observable---the second axiom can yet be used, along with the Stinespring theorem [Eq. \eqref{stinespring}], as a necessary criterion for one to decide when a measurement is concluded. The rationale is that we do not expect a measurement to have been finished before the establishment of reality, that is, before the instant $t$ at which $\rho_t=\Phi_A(\rho)$ and hence $\fR_A(\rho_t)=\fR_\mc{A}^{\max}$. The role of large environments in this respect then consists of ensuring the irreversibility of the measurement. Axioms \ref{A_role_parts} and \ref{A_mix} complete the list of reasonable assumptions for a functional $\fR_A(\rho)$ to be named a reality monotone, while Axioms \ref{A_add} and \ref{A_flag} are additional conditions legitimating a reality measure. However debatable our list of axioms may be, it furnishes an intuitive ``metric independent'' characterization of quantum realism, thus framing the concept in a formal structure. Moreover, as we have explicitly demonstrated (see Table \ref{tab:axioms} for an overview), sensible reality monotones and a reality measure can be built by use of information theoretic quantities associated with the von Neumann, R\'enyi, and Tsallis entropies. 

At least two technical questions are left open for future research. The first one concerns the completion of the last line of Table \ref{tab:axioms}. Indeed, the concept of flagging has been introduced only recently and formal results in this regard are still lacking for the R\'enyi and Tsallis divergences. Second, it would be useful, mainly for operational purposes, to have a picture of whether ``metrics'' other than the entropic ones, such as norm-based quantifiers, can be used as sensible reality monotones. Finally, with regard to resource theories, although some evidence has been put forward suggesting that the $A$-irreality, $\fI_A(\rho)=\ln{d_\mc{A}}-\fR_A(\rho)$, can be viewed as a quantum resource \cite{costa2020}, it would be interesting to have at hand a concrete information task wherein this concept configures a clear  advantage in relation to contexts involving the $A$-reality state $\Phi_A(\rho)$. 

%======================
\begin{acknowledgments}
This research was financed in part by the Coordena\c{c}\~ao de Aperfei\c{c}oamento de Pessoal de N\'ivel Superior - Brasil (CAPES) - Finance Code 001. A.C.O. thanks Valber S. Gomes, Ana C. S. Costa, and Luis G. Longen for useful discussions. R.M.A. acknowledges support from CNPq/Brazil (Grant No. 309373/2020-4) and the National Institute for Science and Technology of Quantum Information (INCT-IQ/CNPq, Brazil).
\end{acknowledgments}

\appendix

%%%%%%%%%%%%%%%%%%%%%%%%%%%%%%%%%%%%%%%%%%%%%%%%%%
\section{Special cases of the R\'enyi divergences}
\label{appendix_cases}

The \textit{min-relative entropy}, as defined by Datta \cite{datta2009}, is the limit of the original R\'enyi divergence as $\alpha\to 0$  . Operationally, it comes as follows. Consider the spectral decompositions   $\rho=\sum_i r_i\ket{\lambda_i}\bra{\lambda_i}$ and $\sigma=\sum_j s_j\ket{\nu_j}\bra{\nu_j}$. Since $f(\rho)=\sum_i f(r_i)\ket{\lambda_i}\bra{\lambda_i}$, then we have $\rho^0=\sum_{i:\, r_i>0} \ket{\lambda_i}\bra{\lambda_i}$, which is called the \textit{projection onto the support of $\rho$}. Thus, the min-relative entropy is given by $D_0(\rho||\sigma)=-\ln\Tr\rho^0\sigma$. Explicitly, we have
\be
D_{\alpha\to 0}(\rho||\sigma)=-\ln\sum_j\sum_{i:\, r_i>0}s_j|\braket{\lambda_i|\nu_j}|^2.
\ee
Note that, the min-relative entropy does not satisfy continuity nor positive definiteness in $\rho$ and $\sigma$, a feature that prevents it to satisfy Axiom \ref{A_reality_meas}. One special case of the sandwiched R\'enyi divergence is the \textit{collision relative entropy}, which was introduced in Ref. \cite{renner2005} in its conditional form as a generalization of the classical conditional collision entropy to the quantum theory. It is obtained when we choose $\alpha=2$ in \eqref{sandwiched}:
\be
\tD_2(\rho||\sigma)=\ln\Tr\left[\left(\sigma^{-\frac{1}{4}}\rho\sigma^{-\frac{1}{4}} \right)^2\right].
\ee
Another special case of $\tD_\alpha$ is obtained as $\alpha\to +\infty$, which is called \textit{max-relative entropy} \cite{datta2009}:
\be
\tD_{\alpha\to+\infty}(\rho||\sigma)=\ln\bignorm{\sigma^{-\frac{1}{2}}\rho\sigma^{-\frac{1}{2}}}{\infty}.
\ee
Here, the \textit{operator norm} $\norm{\varrho}{\infty}$ is given by the maximum eigenvalue of a density state $\varrho$.

%%%%%%%%%%%%%%%%%%%%%%%%%%%%%%%%%%%%%%%%%%%%%
\section{Sudsidiary results}\label{appendixB}

The results presented in this section refers to states such that $\rho\in\mf{B}(\mc{H_S})$, with $\mc{H_S=H_A\otimes H_B}$, and the unrevealed measurements map \eqref{map_phi}.
\begin{theorem}\label{theorem_uni}
	Let the unitary evolution $U_t$ be defined by the Stinespring dilation theorem \eqref{stinespring} with $\epsilon=1$. It follows that $U_t$ commutes with $\Phi_A(\rho)\otimes\mbb{1}_\mc{E}/d_\mc{E}$, that is,
	\be
	U_t \left(\Phi_A(\rho)\otimes\tfrac{\mbb{1}_\mc{E}}{d_\mc{E}}\right) U_t^\dag=\Phi_A(\rho)\otimes\tfrac{\mbb{1}_\mc{E}}{d_\mc{E}}.
	\ee
\end{theorem}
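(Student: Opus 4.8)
The plan is to exhibit the canonical unitary $U_t$ implementing the dilation \eqref{stinespring} with $\epsilon=1$ and then read off the commutation directly from its structure. Take $d_\mc{E}=d_\mc{A}$ with an orthonormal basis $\{\ket{e_k}\}_{k=0}^{d_\mc{A}-1}$ of $\mc{H_E}$, $\ket{e_0}$ being the initial environment state, and let $W$ be the cyclic shift $W\ket{e_k}=\ket{e_{k+1\,(\mathrm{mod}\,d_\mc{A})}}$. Writing $A=\sum_i a_i A_i$ with $A_i=\ket{a_i}\bra{a_i}$, the natural candidate is the controlled shift
\be
U_t=\sum_i(A_i\otimes\mbb{1}_\mc{B})\otimes W^i,
\ee
obtained as the endpoint of an interpolation starting from $U_{t=0}=\mbb{1}_\mc{SE}$ (the interpolation being immaterial to the statement). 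Using $A_iA_j=\delta_{ij}A_i$ and $\braket{e_j|e_i}=\delta_{ij}$ one checks that $\Tr_\mc{E}[U_t(\rho\otimes\ket{e_0}\bra{e_0})U_t^\dag]=\sum_i(A_i\otimes\mbb{1}_\mc{B})\rho(A_i\otimes\mbb{1}_\mc{B})=\Phi_A(\rho)=\mc{M}_A^{1}(\rho)$, so $U_t$ indeed realizes \eqref{stinespring}.

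For the commutation I would combine two elementary facts. First, $\Phi_A(\rho)$ is an $A$-reality state, hence block-diagonal with respect to the projectors $A_i\otimes\mbb{1}_\mc{B}$: substituting $\Phi_A(\rho)=\sum_k(A_k\otimes\mbb{1}_\mc{B})\rho(A_k\otimes\mbb{1}_\mc{B})$ and using $A_iA_k=\delta_{ik}A_i$ gives $(A_i\otimes\mbb{1}_\mc{B})\Phi_A(\rho)(A_j\otimes\mbb{1}_\mc{B})=\delta_{ij}(A_i\otimes\mbb{1}_\mc{B})\Phi_A(\rho)(A_i\otimes\mbb{1}_\mc{B})$. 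Second, the maximally mixed environment state is invariant under every unitary, so $W^i(\mbb{1}_\mc{E}/d_\mc{E})W^{-j}=W^{i-j}/d_\mc{E}$. Then
\be
U_t\Big(\Phi_A(\rho)\otimes\tfrac{\mbb{1}_\mc{E}}{d_\mc{E}}\Big)U_t^\dag=\sum_{i,j}(A_i\otimes\mbb{1}_\mc{B})\Phi_A(\rho)(A_j\otimes\mbb{1}_\mc{B})\otimes\tfrac{W^{i-j}}{d_\mc{E}},
\ee
the block-diagonality annihilates every $i\neq j$ term, and the surviving diagonal sum reassembles $\sum_i(A_i\otimes\mbb{1}_\mc{B})\Phi_A(\rho)(A_i\otimes\mbb{1}_\mc{B})\otimes\mbb{1}_\mc{E}/d_\mc{E}=\Phi_A(\rho)\otimes\mbb{1}_\mc{E}/d_\mc{E}$, which is the claim. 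The argument never uses anything about the environment operators beyond unitarity, so it goes through verbatim with $W^i$ replaced by arbitrary unitaries $V_i$ with $\{V_i\ket{e_0}\}$ orthonormal, and with $\Phi_A(\sigma)$ in place of $\Phi_A(\rho)$ for any $\sigma$ — exactly the generalizations used in Sec.~\ref{secMonotones}.

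The main obstacle here is conceptual rather than computational: the Stinespring dilation of the channel $\Phi_A$ is fixed only up to an isometry on the environment (one may post-compose with any $\mbb{1}_\mc{S}\otimes V_\mc{E}$), and for a generic such choice $U_t(\Phi_A(\rho)\otimes\mbb{1}_\mc{E}/d_\mc{E})U_t^\dag$ need not return to itself. Hence the theorem must be understood to refer to the canonical controlled-unitary (von Neumann pre-measurement) dilation displayed above, and the proof should state this choice explicitly before computing. A secondary, minor point is to note that only the completed case $\epsilon=1$ is needed — the monitoring with $\epsilon<1$ is the convex combination $(1-\epsilon)\rho+\epsilon\,\Phi_A(\rho)$, whose dilation merely adds one flag dimension to $\mc{E}$ — so there is no loss in working throughout with the single unitary $U_t$ above, which is all that the derivations of $\fR_A$, $\fR_A^{\alpha}$, and $\fR_A^q$ invoke.
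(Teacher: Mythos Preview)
Your proof is correct and follows essentially the same route as the paper: both exhibit the controlled-shift dilation $U_t=\sum_k (A_k\otimes\mbb{1}_\mc{B})\otimes T_k$ (with $T_k$ the cyclic shift, or more generally unitaries with $\braket{e_0|T_j^\dag T_i|e_0}=\delta_{ij}$), verify it implements $\Phi_A$, and then obtain the invariance by combining projector orthogonality on the system with unitary invariance of $\mbb{1}_\mc{E}/d_\mc{E}$. Your explicit remark that the theorem is tied to this canonical dilation (and would fail for an arbitrary Stinespring representative) is a useful clarification the paper only hints at with its ``unique up to a unitary operation over the environment'' comment.
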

\begin{proof}
Take the joint state $\upsilon_0=\rho\otimes\ket{e_0}\bra{e_0}\in\mathfrak{B}(\mathcal{H_S\otimes H_E})$, with $d_\mathcal{E}=\dim\mathcal{H_E}=\dim\mathcal{H_A}$. Write the unitary operator
\be\label{unitary}
U_t=\sum_{k=0}^{d_\mc{E}-1} P_k\otimes T_k,
\ee
where $P_k=A_k\otimes\mbb{1}_\mc{B}$ is a subspace projector and $T_k$ is a unitary operator satisfying $T_kT_k^\dag=T_k^\dag T_k=\mbb{1}_\mc{E}$ and $\braket{e_0|T_j^\dag T_i|e_0}=\delta_{ij}$.  An example of this structure is provided by the shift operator $T_k\ket{e_i}=\ket{e_{i+k}}$ in the cyclic space with the boundary condition $T_1\ket{e_{d_\mathcal{E}-1}}=\ket{e_{d_\mathcal{E}}}=\ket{e_0}$. Its matrix representation is given by a power of the generalized Pauli operator $\sigma_x$,
\be\label{gen_pauli_x}
T_k\stackrel{\cdot}{\equiv} \begin{pmatrix}
0 & 0 & 0 & \cdots & 0 & 1 \\
1 & 0 & 0 & \cdots & 0 & 0 \\
0 & 1 & 0 & \cdots & 0 & 0 \\
0 & 0 & 1 & \cdots & 0 & 0 \\
\vdots & \vdots & \vdots & \ddots & \vdots & \vdots \\
0 & 0 & 0 & \cdots & 1 & 0
\end{pmatrix}^k.
\ee
Notice that $T_iT_j=T_{i+j}$, which renders $U_tU_t=\sum_{i}P_i\otimes T_{2i}$.
One has $\upsilon_t=U_t\upsilon_0U_t^\dag=\sum_{i,j}P_i\rho P_j\otimes T_i\ket{e_0}\bra{e_0}T_j^\dag$, which correctly reproduces the Stinespring relation
\be
\Tr_\mathcal{E}(\upsilon_t)=\sum_{ij}P_i\rho P_j \braket{e_0|T_j^\dag T_i|e_0}
=\Phi_A(\rho),
\ee
The unitary operator \eqref{unitary} is unique up to a unitary operation over the environment. Finally, by direct application of $U_t$ we have $U_t\left(\Phi_A(\rho)\otimes\frac{\mbb{1_\mc{E}}}{d_\mc{E}} \right)U_t^\dagger=\sum_{kij} P_iP_k\rho P_k P_j \otimes T_i (\mbb{1}_\mc{E}/d_\mathcal{E}) T_j^\dagger$, which yields the desired result since $P_i$ is a projector and  $T_i$ is unitary.
\end{proof}
\noindent As a corollary we have  $\Phi_A\left(\Phi_A(\rho)\right)=\Phi_A(\rho)$. This shows that once a state of reality is established by the conjugation of a unitary evolution and a discard, then repeating this operation is innocuous, as maximum reality cannot be further enhanced.

%===========================
\begin{lemma}\label{lemma_f}
For any function $f$, any quantum state $\rho$, and any observable $A$, one has $\Tr[\rho\, f\left(\Phi_A(\rho)\right)]=\Tr[\Phi_A(\rho)f\left(\Phi_A(\rho)\right)]$.
\end{lemma}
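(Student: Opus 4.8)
The plan is to exploit that $\Phi_A(\rho)$ commutes with each subspace projector $P_i := A_i\otimes\mbb{1}_\mc{B}$, so that $f(\Phi_A(\rho))$ does too, and then to transform one trace into the other by cyclicity together with $\sum_iP_i=\mbb{1}$.

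First I would record the commutation $[P_i,\Phi_A(\rho)]=0$. From $A_iA_j=\delta_{ij}A_i$ one gets $P_iP_j=\delta_{ij}P_i$, hence $P_i\Phi_A(\rho)=\sum_jP_iP_j\rho P_j=P_i\rho P_i=\sum_jP_j\rho P_jP_i=\Phi_A(\rho)P_i$. Since $\Phi_A(\rho)$ leaves each block $P_i\mc{H_S}$ invariant, one may choose an eigenbasis of $\Phi_A(\rho)$ adapted to the orthogonal decomposition $\mc{H_S}=\bigoplus_iP_i\mc{H_S}$; as $f(\Phi_A(\rho))$ is diagonal in any such eigenbasis, it follows that $[P_i,f(\Phi_A(\rho))]=0$ as well, irrespective of how $f$ is declared to act on the kernel of $\Phi_A(\rho)$.

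The rest is bookkeeping. Using $\mbb{1}=\sum_iP_i$ and the commutation just established, $f(\Phi_A(\rho))=\sum_{i,j}P_if(\Phi_A(\rho))P_j=\sum_iP_if(\Phi_A(\rho))P_i$, the off-diagonal terms vanishing because $P_jP_i=\delta_{ij}P_i$. Therefore $\Tr[\rho\,f(\Phi_A(\rho))]=\sum_i\Tr[\rho\,P_if(\Phi_A(\rho))P_i]=\sum_i\Tr[P_i\rho P_i\,f(\Phi_A(\rho))]$, where I used cyclicity of the trace and $P_i^2=P_i$; and by the definition \eqref{map_phi} of $\Phi_A$ the last expression is exactly $\Tr[\Phi_A(\rho)\,f(\Phi_A(\rho))]$.

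The only step needing any care is the passage from $[P_i,\Phi_A(\rho)]=0$ to $[P_i,f(\Phi_A(\rho))]=0$ for an arbitrary $f$, in particular controlling the behaviour of $f$ on the kernel of $\Phi_A(\rho)$; once one observes that $f(\Phi_A(\rho))$ is diagonal in an eigenbasis of $\Phi_A(\rho)$ adapted to the block structure induced by $\{P_i\}$, this is immediate, and everything else reduces to cyclicity of the trace.
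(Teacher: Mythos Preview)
Your argument is correct: the key observation that $\Phi_A(\rho)$ is block-diagonal with respect to the projectors $P_i=A_i\otimes\mbb{1}_\mc{B}$, hence so is $f(\Phi_A(\rho))$, followed by cyclicity of the trace, is exactly the right mechanism. The paper itself does not supply a proof but merely cites the Appendix of Ref.~\cite{costa2013}; your self-contained derivation is the standard one and matches what that reference contains.
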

\begin{proof}
\noindent See the Appendix of Ref. \cite{costa2013}.
\end{proof}

%===============================
\begin{lemma}\label{lemma_bound}
Given the reality state $\Phi_A(\rho)=\sum_ip_iA_i\otimes\rho_{\mc{B}|i}$, it holds that $D\left(\rho||\Phi_A(\rho)\right)\leqslant S\left(\Phi_A(\rho_\mc{A})\right)\leqslant \ln{d_\mc{A}}$.
\end{lemma}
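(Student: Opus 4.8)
The plan is to establish the chain of inequalities in three stages, working from the definition of the von Neumann relative entropy and exploiting the special structure of $\Phi_A(\rho)$. First I would rewrite $D(\rho||\Phi_A(\rho))$ in a more tractable form. Expanding the definition \eqref{vonNeuman}, $D(\rho||\Phi_A(\rho))=\Tr[\rho\ln\rho]-\Tr[\rho\ln\Phi_A(\rho)]=-S(\rho)-\Tr[\rho\ln\Phi_A(\rho)]$. Now I invoke Lemma \ref{lemma_f} with $f=\ln$: it gives $\Tr[\rho\ln\Phi_A(\rho)]=\Tr[\Phi_A(\rho)\ln\Phi_A(\rho)]=-S(\Phi_A(\rho))$. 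Hence $D(\rho||\Phi_A(\rho))=S(\Phi_A(\rho))-S(\rho)$, which is exactly the expression $\fI_A(\rho)$ used in the main text. This reduces the first inequality to showing $S(\Phi_A(\rho))-S(\rho)\leqslant S(\Phi_A(\rho_\mc{A}))$.

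For the middle inequality, I would use the explicit form $\Phi_A(\rho)=\sum_i p_i\,A_i\otimes\rho_{\mc{B}|i}$. Because the $A_i$ are orthogonal one-dimensional projectors, this is a block-diagonal (direct-sum) decomposition, so the joint entropy theorem applies: $S(\Phi_A(\rho))=H(\{p_i\})+\sum_i p_i\,S(\rho_{\mc{B}|i})$, where $H(\{p_i\})=-\sum_i p_i\ln p_i$. On the other hand $\Phi_A(\rho_\mc{A})=\sum_i p_i A_i$ (since $\rho_\mc{A}=\Tr_\mc{B}\rho$ and $p_i=\Tr[A_i\rho_\mc{A}]$), whose entropy is precisely $S(\Phi_A(\rho_\mc{A}))=H(\{p_i\})$. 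So the claim $S(\Phi_A(\rho))-S(\rho)\leqslant S(\Phi_A(\rho_\mc{A}))$ becomes $\sum_i p_i\,S(\rho_{\mc{B}|i})\leqslant S(\rho)$. This is where the main work lies: I would recognize $S(\rho)-\sum_i p_i S(\rho_{\mc{B}|i})$ as a Holevo-type quantity, or more directly argue via the concavity/data-processing properties of the von Neumann entropy — e.g. $\sum_i p_i S(\rho_{\mc{B}|i})\leqslant S\!\big(\sum_i p_i\rho_{\mc{B}|i}\big)=S(\rho_\mc{B})\leqslant S(\rho_\mc{B})+S(\rho_\mc{A}|\mc{B})$ needs care since conditional entropy can be negative; the cleaner route is to note that applying the pinching channel $\Phi_A$ cannot decrease entropy, so $S(\rho)\leqslant S(\Phi_A(\rho))$, and then combine with subadditivity $S(\Phi_A(\rho))\leqslant S(\Phi_A(\rho_\mc{A}))+S(\rho_\mc{B})$ together with $S(\rho)\geqslant |S(\rho_\mc{A})-S(\rho_\mc{B})|$. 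I expect the delicate bookkeeping to be in selecting the combination of subadditivity, concavity, and the Araki–Lieb inequality that closes the gap without sign issues; most likely the intended argument is simply: $S(\Phi_A(\rho))\leqslant S(\Phi_A(\rho_\mc{A}))+S(\Phi_A(\rho)_\mc{B})=S(\Phi_A(\rho_\mc{A}))+S(\rho_\mc{B})$ by subadditivity applied to $\Phi_A(\rho)$, and $S(\rho_\mc{B})\leqslant S(\rho)$ fails in general, so instead one uses $S(\rho)\geqslant S(\Phi_A(\rho))-S(\Phi_A(\rho_\mc{A}))$ derived from the fact that $\Phi_A\otimes\mathrm{id}$ followed by dephasing is a channel — I would pin this down carefully in the write-up.

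Finally, the rightmost inequality $S(\Phi_A(\rho_\mc{A}))\leqslant\ln d_\mc{A}$ is immediate: $\Phi_A(\rho_\mc{A})=\sum_i p_i A_i$ is a density operator on $\mc{H_A}$, and the von Neumann entropy of any state on a $d_\mc{A}$-dimensional space is at most $\ln d_\mc{A}$, with equality iff all $p_i=1/d_\mc{A}$. The main obstacle, as flagged above, is cleanly justifying $S(\Phi_A(\rho))-S(\rho)\leqslant S(\Phi_A(\rho_\mc{A}))$; everything else is either a direct application of Lemma \ref{lemma_f}, the joint entropy theorem, or the elementary bound on entropy by log-dimension.
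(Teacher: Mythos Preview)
Your reduction is exactly the paper's: use Lemma~\ref{lemma_f} to get $D(\rho\|\Phi_A(\rho))=S(\Phi_A(\rho))-S(\rho)$, apply the joint entropy theorem to write $S(\Phi_A(\rho))=H(\{p_i\})+\sum_i p_i S(\rho_{\mc{B}|i})$, identify $H(\{p_i\})=S(\Phi_A(\rho_\mc{A}))$, and reduce everything to the single inequality
\[
\sum_i p_i\,S(\rho_{\mc{B}|i})\;\leqslant\;S(\rho).
\]
The gap is that none of the routes you sketch actually closes this. Concavity gives $\sum_i p_i S(\rho_{\mc{B}|i})\leqslant S(\rho_\mc{B})$, and subadditivity on $\Phi_A(\rho)$ gives $S(\Phi_A(\rho))\leqslant S(\Phi_A(\rho_\mc{A}))+S(\rho_\mc{B})$; both leave you needing $S(\rho_\mc{B})\leqslant S(\rho)$, which is simply false for entangled $\rho$ (any pure entangled state is a counterexample). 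Araki--Lieb, $S(\rho)\geqslant|S(\rho_\mc{A})-S(\rho_\mc{B})|$, does not rescue this either, since it can give $0$ on the right-hand side. So the ``delicate bookkeeping'' you anticipate is not just bookkeeping: the ingredients you list are insufficient.

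The paper does not prove this step from scratch; it invokes it as an external lemma (Lemma~2 of Ref.~\cite{xi2011}). A clean self-contained argument goes through purification: let $\ket{\psi}_{\mc{ABR}}$ purify $\rho$, so $S(\rho)=S(\rho_\mc{R})$. A rank-one projective measurement of $A$ with outcome $i$ leaves $\mc{BR}$ in a pure state, hence $S(\rho_{\mc{B}|i})=S(\rho_{\mc{R}|i})$. Since $\Phi_A$ acts only on $\mc{A}$, one has $\sum_i p_i\,\rho_{\mc{R}|i}=\rho_\mc{R}$, and concavity of the von Neumann entropy then gives $\sum_i p_i S(\rho_{\mc{B}|i})=\sum_i p_i S(\rho_{\mc{R}|i})\leqslant S(\rho_\mc{R})=S(\rho)$. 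That is the missing idea; once you have it, the rest of your plan is complete and matches the paper line for line.
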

\begin{proof}
From Lemma \ref{lemma_f} one can straightforwardly show that $D\left(\rho||\Phi_A(\rho)\right)=S(\Phi_A(\rho))-S(\rho)$. Since $S(\rho)\geqslant\sum_ip_iS(\rho_{\mc{B}|i})$ (see Lemma 2 of Ref. \cite{xi2011}), we can employ the joint entropy theorem $S\left(\Phi_A(\rho)\right)=H(\{p_i\})+\sum_ip_iS(\rho_{\mc{B}|i})$ \cite{nielsen2000}, with $H(\{p_i\})$ being the Shannon entropy of the distribution $p_i$, to finally obtain $D\left(\rho||\Phi_A(\rho)\right)\leqslant H(\{p_i\})=S\left(\Phi_A(\rho_\mc{A})\right)\leqslant \ln{d_\mc{A}}$.
\end{proof}

%=====================================
\begin{lemma}\label{lemma_uncertainty}
Consider generic observables $X,Y\in\mf{B}(\mc{H_A})$ and the von Neumann reality quantifier \eqref{R_vonNeumann}. It follows that $\fR_X(\rho)+\fR_Y(\rho)\leqslant 2\ln{d_\mc{A}}$, with equality iff $\rho=\Phi_X(\rho)=\Phi_Y(\rho)$.
\end{lemma}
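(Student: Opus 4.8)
The statement is really a single-observable bound applied twice, so the plan is to first establish that $\mathfrak{R}_A(\rho)\leqslant\ln d_\mc{A}$ for an arbitrary observable $A\in\mf{B}(\mc{H_A})$ and arbitrary $\rho$, and then simply add the inequalities for $X$ and $Y$. By definition \eqref{R_vonNeumann}, $\mathfrak{R}_A(\rho)=\ln d_\mc{A}-D\!\left(\rho||\Phi_A(\rho)\right)$, so the desired bound is equivalent to the nonnegativity $D\!\left(\rho||\Phi_A(\rho)\right)\geqslant 0$. The cleanest way to get this from results already in the excerpt is to use Lemma \ref{lemma_f} exactly as in the proof of Lemma \ref{lemma_bound}: it gives $D\!\left(\rho||\Phi_A(\rho)\right)=S\!\left(\Phi_A(\rho)\right)-S(\rho)$. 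Since $\Phi_A$ is a pinching (a convex mixture of unitaries $A_i\otimes\mbb{1}_\mc{B}$ up to the completeness of the $A_i$), it cannot decrease the von Neumann entropy, whence $S\!\left(\Phi_A(\rho)\right)\geqslant S(\rho)$ and therefore $D\!\left(\rho||\Phi_A(\rho)\right)\geqslant 0$. (Equivalently, one may invoke the positive definiteness property \eqref{D>0} directly, which is Klein's inequality for the pair of density operators $\rho$ and $\Phi_A(\rho)$.)

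Having $\mathfrak{R}_X(\rho)\leqslant\ln d_\mc{A}$ and $\mathfrak{R}_Y(\rho)\leqslant\ln d_\mc{A}$, adding the two yields $\mathfrak{R}_X(\rho)+\mathfrak{R}_Y(\rho)\leqslant 2\ln d_\mc{A}$, which is the claimed uncertainty relation (Axiom \ref{A_uncertainty} with $\mathfrak{R}_\mc{A}^{\max}=\ln d_\mc{A}$). For the equality condition, note that the sum attains $2\ln d_\mc{A}$ if and only if \emph{both} terms attain $\ln d_\mc{A}$, i.e.\ $D\!\left(\rho||\Phi_X(\rho)\right)=0$ and $D\!\left(\rho||\Phi_Y(\rho)\right)=0$ simultaneously. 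By the equality clause of positive definiteness \eqref{D>0} (or, in the entropic formulation, because $S(\Phi_A(\rho))=S(\rho)$ forces $\Phi_A(\rho)=\rho$ for a pinching), each vanishing divergence is equivalent to $\rho$ being the corresponding reality state, so equality holds iff $\rho=\Phi_X(\rho)=\Phi_Y(\rho)$, as asserted.

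I do not anticipate a genuine obstacle here; the lemma is essentially a two-line consequence of nonnegativity of the relative entropy together with the identification $\mathfrak{R}_\mc{A}^{\max}=\ln d_\mc{A}$. The only point deserving a moment's care is the justification of $D\!\left(\rho||\Phi_A(\rho)\right)\geqslant 0$: the positive definiteness stated in \eqref{D>0} is phrased under the hypothesis $\rho\geqslant\sigma$, which need not hold for $\sigma=\Phi_A(\rho)$, so it is safer to route the argument through the entropy identity of Lemma \ref{lemma_f} and the entropy-nondecreasing property of the pinching $\Phi_A$, where both the inequality and its equality case are standard. This also keeps the proof self-contained within the tools already introduced in Appendix \ref{appendixB}.
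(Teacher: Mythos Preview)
Your proposal is correct and follows essentially the same route as the paper: the paper's proof simply invokes Eqs.~\eqref{D>0} and \eqref{R_vonNeumann} to conclude $\fR_A(\rho)\leqslant\ln d_\mc{A}$ for each observable, sums the two bounds, and reads off the equality condition from the equality clause of \eqref{D>0}. Your only addition is the (sensible) caution about the ``$\rho\geqslant\sigma$'' qualifier preceding \eqref{D>0}, which you circumvent via Lemma~\ref{lemma_f} and the entropy-nondecreasing property of the pinching $\Phi_A$; the paper does not take this detour and relies directly on positive definiteness, but the substance of the argument is the same.
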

\begin{proof}
By Eqs. \eqref{D>0} and \eqref{R_vonNeumann} we see that the main claim is readily satisfied and that the equality holds iff $\rho=\Phi_X(\rho)=\Phi_Y(\rho)$, meaning that $\rho$ must be a state of simultaneous reality for $X$ and $Y$. This will certainly be the case when $[X,Y]=0$, for $X$ and $Y$ will share the same set of eigenstates so that $\Phi_X=\Phi_Y$, but also for $\rho=(\mbb{1}_\mc{A}/d_\mc{A})\otimes\rho_\mc{B}$, as can be checked by direct calculation.
\end{proof}
\noindent Note that this proof is valid also for a reality quantifier that is based on any divergence measure that respects the positive definiteness property [Eq.~\eqref{D>0}].

%===================================
\begin{lemma}\label{lemma_delta}
Consider observables $X,Y\in\mathfrak{B}(\mathcal{H_A})$ and the von Neumann reality quantifier \eqref{R_vonNeumann}. If $\Phi_{XY}=\Phi_{YX}$, then the monitoring of $Y$ never decreases the reality of $X$, that is,
\be\label{delta_RXY}
\Delta\coloneqq\fR_X(\mc{M}_Y^\epsilon(\rho))-\fR_X(\rho)\geqslant 0,\qquad\forall\epsilon\in[0,1].
\ee
\end{lemma}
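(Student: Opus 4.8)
The plan is to reduce the inequality to two ingredients already available for the quantifier \eqref{R_vonNeumann}: its monotonicity under the channel $\Phi_Y$ (a consequence of DPI) and its concavity (Axiom \ref{A_mix}). First I would unpack the hypothesis: $\Phi_{XY}=\Phi_{YX}$ means, by the notational convention $\Phi_{XY}\equiv\Phi_X\Phi_Y$, that the CPTP maps $\Phi_X$ and $\Phi_Y$ commute as superoperators, so that $\Phi_Y\left(\Phi_X(\rho)\right)=\Phi_X\left(\Phi_Y(\rho)\right)$ for every state $\rho$. (This automatically covers the case in which the eigenbases of $X$ and $Y$ coincide or are mutually unbiased, recovering the setting of Ref.~\cite{dieguez2018}.)

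Second, I would establish the key intermediate fact that a monitoring of $Y$ at full strength does not decrease the $X$-reality, i.e. $\fR_X\left(\Phi_Y(\rho)\right)\geqslant\fR_X(\rho)$. Starting from $\fR_X(\rho)=\ln d_\mc{A}-D\left(\rho||\Phi_X(\rho)\right)$ and applying the data processing inequality \eqref{DPI} to the channel $\Lambda=\Phi_Y$ gives $D\left(\Phi_Y(\rho)\bigdbar\Phi_Y(\Phi_X(\rho))\right)\leqslant D\left(\rho||\Phi_X(\rho)\right)$; using the commutation of the first step to rewrite $\Phi_Y(\Phi_X(\rho))=\Phi_X(\Phi_Y(\rho))$ yields $D\left(\Phi_Y(\rho)\bigdbar\Phi_X(\Phi_Y(\rho))\right)\leqslant D\left(\rho||\Phi_X(\rho)\right)$, which is precisely $\fR_X\left(\Phi_Y(\rho)\right)\geqslant\fR_X(\rho)$.

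Third, I would invoke concavity. Since $\mc{M}_Y^\epsilon(\rho)=(1-\epsilon)\rho+\epsilon\,\Phi_Y(\rho)$, Axiom \ref{A_mix} (valid for \eqref{R_vonNeumann} by joint convexity \eqref{convexity} of $D$) gives $\fR_X\left(\mc{M}_Y^\epsilon(\rho)\right)\geqslant(1-\epsilon)\fR_X(\rho)+\epsilon\,\fR_X\left(\Phi_Y(\rho)\right)$. Combining with the second step, $\Delta=\fR_X\left(\mc{M}_Y^\epsilon(\rho)\right)-\fR_X(\rho)\geqslant\epsilon\left[\fR_X\left(\Phi_Y(\rho)\right)-\fR_X(\rho)\right]\geqslant 0$ for all $\epsilon\in[0,1]$, which is the claim.

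The main point requiring care is the applicability of DPI in the second step: one needs $\Phi_Y$ to be genuinely CPTP — which it is, being an unrevealed projective measurement of the form \eqref{map_phi} — and the support condition guaranteeing $D\left(\rho||\Phi_X(\rho)\right)$ to be finite; since this same relative entropy already appears in the definition \eqref{R_vonNeumann}, no new obstruction arises. Finally, I would remark that the mixing step can be bypassed altogether: $\mc{M}_Y^\epsilon$ is itself CPTP and, by a one-line computation using the hypothesis, commutes with $\Phi_X$, so DPI applied directly to $\Lambda=\mc{M}_Y^\epsilon$ with the pair $\bigl(\rho,\Phi_X(\rho)\bigr)$ gives $D\left(\mc{M}_Y^\epsilon(\rho)\bigdbar\Phi_X(\mc{M}_Y^\epsilon(\rho))\right)\leqslant D\left(\rho||\Phi_X(\rho)\right)$, hence $\Delta\geqslant 0$ at once.
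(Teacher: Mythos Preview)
Your proof is correct and follows essentially the same route as the paper's: first use mixing (Axiom~\ref{A_mix}) to obtain $\Delta\geqslant\epsilon\left[\fR_X(\Phi_Y(\rho))-\fR_X(\rho)\right]$, then use DPI with the channel $\Phi_Y$ together with the hypothesis $\Phi_{XY}=\Phi_{YX}$ to conclude that the bracket is nonnegative. Your final remark---that one can bypass mixing entirely by applying DPI directly to $\mc{M}_Y^\epsilon$, which commutes with $\Phi_X$ under the hypothesis---is a valid and slightly more streamlined alternative not made explicit in the paper.
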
 
\begin{proof}
In light of Axiom \ref{A_mix} (mixing) and definition \eqref{monitoring}, we find $\Delta\geqslant \epsilon\left[\fR_X(\Phi_Y(\rho))-\fR_X(\rho) \right]$, which can be explicitly expressed as $\Delta\geqslant\epsilon\left[D\left(\rho||\Phi_X(\rho)\right)-D\left(\Phi_Y(\rho)||\Phi_{XY}(\rho)\right) \right]$, where $\Phi_{XY}(\rho)\equiv\Phi_X\Phi_Y(\rho)$. Using the hypothesis and DPI we can write $D\left(\Phi_Y(\rho)||\Phi_{XY}(\rho)\right)=D\left(\Phi_Y(\rho)||\Phi_{YX}(\rho)\right)\leqslant D\left(\rho||\Phi_X(\rho)\right)$, which proves that $\Delta\geqslant 0$ $\forall\,\epsilon\in[0,1]$, as desired. It is worth noticing that, apart from the trivial scenario where $[X,Y]=0$, the hypothesis is true also when $X$ and $Y$ are maximally noncommuting, that is, when their eigenstates form MUBs satisfying $|\braket{x_i|y_j}|=1/\sqrt{d_\mc{A}}$.
\end{proof}

\vskip0.01mm
%%%%%%%%%%%%%%%%%%%%%%%%%%%%%%%%%%%%%%%%%%%%%%%%%%%%%
\section{Proof that $\fR_A^{\alpha\up}$ is a reality monotone}\label{appendix_opt}

By construction, $\fR_A^{\alpha\up}$ is in harmony with Axiom \ref{A_reality_inf}. We now prove that $\fR_A^{\alpha\up}(\rho)=\fR^{\max}$ iff $\rho=\Phi_A(\rho)$, as per Axiom \ref{A_reality_meas}. The claim is true iff $D_\alpha\left(\upsilon_t||\bar{\sigma}_\mc{S}\otimes\mbb{1}_\mc{E}/d_\mc{E}\right)=\ln{d_\mc{E}}$, where $\bar{\sigma}_\mc{S}$ is the solution for the minimization and $\upsilon_t=U_t\left(\rho\otimes \ket{e_0}\bra{e_0} \right)U_t^\dag$. Choosing $\bar{\sigma}_\mc{S}=\Phi_A(\bar{\sigma}_\mc{S})$ (an $A$-reality state), one can apply Theorem \ref{theorem_uni} and unitary invariance to obtain $D_\alpha\left(\rho||\Phi_A(\bar{\sigma}_\mc{S})\right)=0$, which can be reached iff $\rho=\Phi_\mc{A}(\bar{\sigma}_\mc{S})$, meaning that $\rho$ is an $A$-reality state satisfying $\Phi_A(\rho)=\rho$. Axioms \ref{A_reality_meas} and \ref{A_role_parts}(a) are satisfied directly from DPI and the fact that $U_t^{\epsilon}$ in Eq. \eqref{stinespring} commutes with $\mc{M}_A^\epsilon$ and $\Tr_{\mathcal{X}}$ for $\mc{H_X}\subseteq\mathcal{H_B}$. Provided the optimization is made over $\sigma_\mc{S}\otimes\Omega\in\mathfrak{B}(\mathcal{H_S\otimes H}_\Omega)$, Axiom \ref{A_role_parts}(b) is trivially satisfied via the additivity property. Also, because Axiom \ref{A_reality_meas} applies, we can use the arguments employed for the proof of Lemma \ref{lemma_uncertainty} to show that Axiom \ref{A_uncertainty} is also true for $\fR_A^{\alpha\up}$. Finally, the validity of Axiom \ref{A_mix} is immediately verified by the convexity of the conditional information \eqref{Icond_Renyi_down}. Although additivity guarantees the agreement with Axiom \ref{A_add}, the flagging property has not yet been demonstrated for the quantity \eqref{Icond_Renyi_down}, which precludes $\fR_A^{\alpha\up}$ to be promoted to the status of a reality measure for $\alpha\in(0,1)$.

%==========================

%====================
\end{document}